\documentclass[11pt]{article}

\usepackage{amssymb,amsmath,amsthm,graphicx,framed,cleveref}
\usepackage{subfigure}
\usepackage{color}
\usepackage{fullpage}
\usepackage{url}
\usepackage[linesnumbered,ruled,vlined]{algorithm2e}
\usepackage{enumerate}
\usepackage{cite}
\newtheorem{definition}{Definition}
\newtheorem{lemma}{Lemma}

\newtheorem{corollary}{Corollary}
\newtheorem{theorem}{Theorem}
\newtheorem{observation}{Observation}

\newcommand{\memo}[1]{}


\title{{Complexity and algorithms for} finding\\ a perfect phylogeny from mixed tumor samples\thanks{This work has been submitted to the IEEE for possible publication. Copyright may be transferred without notice, after which this version may no longer be accessible.}}

\author{Ademir Hujdurovi\' c$^{a,b}$\thanks{E-mail: \texttt{ademir.hujdurovic@upr.si}}
\and
Ur\v sa~Ka\v car$^{c}$\thanks{E-mail: \texttt{ursa@celtra.com}}
\and
Martin Milani\v c$^{a,b}$\thanks{E-mail: \texttt{martin.milanic@upr.si}}
\and
Bernard Ries$^{d}$\thanks{E-mail: \texttt{bernard.ries@unifr.ch}}
\and
Alexandru I.~Tomescu$^{e}$\thanks{E-mail: \texttt{tomescu@cs.helsinki.fi}}
}

\begin{document}
\maketitle

\begin{center}
$^a$ University of Primorska, UP IAM, Muzejski trg 2, SI 6000 Koper, Slovenia\\

$^b$ University of Primorska, UP FAMNIT, Glagolja\v ska 8, SI 6000 Koper, Slovenia \\

$^c$ Celtra Inc., Ljubljana, Slovenia\\

$^d$ University of Fribourg, Department of Informatics Decision Support and Operations Research Group, Fribourg, Switzerland\\

$^e$ Helsinki Institute for Information Technology HIIT, Department of Computer Science, University of Helsinki, Finland\\
\end{center}

\bigskip

\begin{abstract}
Recently, Hajirasouliha and Raphael (WABI 2014) proposed a model for deconvoluting mixed tumor samples measured from a collection of high-throughput sequencing reads. This is related to understanding tumor evolution and critical cancer mutations. In short, their formulation asks to split each row of a binary matrix so that the resulting matrix corresponds to a perfect phylogeny and has the minimum number of rows among all matrices with this property. In this paper we disprove several claims about this problem, including an NP-hardness proof of it. However, we show that the problem is indeed NP-hard, by providing a different proof. We also prove NP-completeness of a variant of this problem proposed in the same paper.
On the positive side, we propose an efficient (though not necessarily optimal) heuristic algorithm based on coloring co-comparability graphs, and a polynomial time algorithm for solving the problem optimally on matrix instances in which no column is contained in both columns of a pair of conflicting columns. Implementations of these algorithms are freely available at \hbox{\url{https://github.com/alexandrutomescu/MixedPerfectPhylogeny}}.
\end{abstract}

\noindent
{\bf Keywords:} perfect phylogeny, graph coloring, NP-hard problem, heuristic algorithm

\section{Introduction}\label{sec:introduction}

Tumor progression is assumed to follow a phylogenetic evolution in which each tumor cell passes its somatic mutations to its daughter cells as it divides, with new mutations being accumulated over time. It is important to discover what tumor types are present in the sample, at what evolutionary stage the tumor is in, or what are the ``founder'' mutations of the tumor, mutations that trigger an uncontrollable growth of the tumor. These can lead to better understanding of cancer \cite{Nik-Zainal:aa,campbell08}, better diagnosis, and more targeted therapies \cite{citeulike:12460879}.

DNA sequencing is one method for discovering the somatic mutations present in each tumor sample. The most accurate possible observation would come from sampling and sequencing every single cell. However, because of single-cell sequencing limitations, and the sheer number of tumor cells, one usually samples populations of cells. Even though the samples are taken spatially and morphologically apart, they can still contain millions of different cancer cells. Moreover, this mixing is not consistent across different collections of samples. Therefore, studying only these mixed samples poses a serious challenge to understanding tumors, their evolution, or their founding mutations.

Solutions for overcoming this limitation can come from a computational approach, as one could deconvolute each sample by exploiting some properties of the tumor progression. One common assumption is that all mutations in the parent cells are passed to the descendants. Another one, called the ``infinite sites assumption'', postulates that once a mutation occurs at a particular site, it does not occur again at that site. These two assumptions give rise to the so-called \emph{perfect phylogeny} evolutionary model. Hajirasouliha and Raphael proposed in~\cite{DBLP:conf/wabi/HajirasoulihaR14} a model for deconvoluting each sample into a set of tumor types so that the \emph{multi-set} of all resulting tumor types forms a perfect phylogeny, and is minimum with this property. Even though this model has some limitations, for example it assumes no errors, and only single nucleotide variant mutations, it is a fundamental problem whose understanding can lead to more practical extensions.

Other major approaches for deconvoluting tumor heterogeneity include methods based on somatic point mutations, such as PyClone~\cite{citeulike:13162167}, SciClone~\cite{citeulike:13321876}, {PhyloSub}~\cite{Jiao:2014aa}, CITUP~\cite{Malikic01052015}, LICHeE~\cite{Popic2015}, and methods based on somatic copy number alterations, such as THetA~\cite{THetA}, TITAN~\cite{Ha01112014} and MixClone~\cite{DBLP:journals/bmcgenomics/LiX15}.

Let us review two methods from the first category mentioned above. CITUP~\cite{Malikic01052015} exhaustively enumerates through all possible phylogenetic trees (up to maximum number of vertices) and tries to decompose each sample into several nodes of the phylogeny. The fit between each sample and the phylogenetic tree is one minimizing a Bayesian information criterion on the frequencies of each mutation. This is computed either exactly, with quadratic integer programming, or with a heuristic iterative method. The tree achieving an optimal fit is output, together with the decompositions of each sample as nodes (i.e., sets of mutations) of this tree.

Method LICHeE~\cite{Popic2015} also tries to fit the observed mutation frequencies to an optimal phylogenetic tree, but with an optimized search for such a tree. Mutations are first assigned to clusters based on their frequencies (a mutation can belong to more clusters). These clusters form the nodes of a directed acyclic graph (DAG). Directed edges are added to these graphs from a node to all its possible descendants, based on inclusions among their mutation sets and on compatibility among their observed frequencies. Spanning trees of this DAG are enumerated, and the ones best compatible with the mutation frequencies are output.

As opposed to the problem proposed in \cite{DBLP:conf/wabi/HajirasoulihaR14} and considered in this paper, these two methods heavily rely on the mutation frequencies. Frequencies appear at the core of their problem formulations, and without them they would probably output an arbitrary phylogenetic tree and a decomposition of samples into nodes of this tree compatible with the observed data. The two problems in this paper have the same output, but only assume data on absence or presence of mutations in each sample. The objective function of our first problem requires that the sum, over each sample, of the number of leaves of the phylogeny that the sample is decomposed to, is minimum. The second problem formulation only requires that output phylogeny has the minimum number of leaves.

In this paper we show that several proofs from~\cite{DBLP:conf/wabi/HajirasoulihaR14} related to the optimization problem
proposed therein are incorrect, including an NP-hardness proof of it.\footnote{Hajirasouliha and Raphael mentioned during their WABI 2014 talk that their claim about every graph being a row-conflict graph (Theorem 4 in~\cite{DBLP:conf/wabi/HajirasoulihaR14}) contained a flaw and proposed a correction stating that for every binary matrix $M$ with an all-zeros row and an all-ones row, the complement of $G_{M,r}$ (for any row $r$ of $M$) is transitively orientable (cf.~Section~\ref{sec:problem-formulation} for the definition of $G_{M,r}$ and Theorem~\ref{thm:GMr} below). In particular, the fact that
Theorem 4 in~\cite{DBLP:conf/wabi/HajirasoulihaR14} is incorrect implies that the NP-hardness proof from~\cite{DBLP:conf/wabi/HajirasoulihaR14}
is incorrect.} However, the NP-hardness claim turns out to be correct: in this paper we provide a different NP-hardness proof. We also adapt this proof to a variant of the problem also proposed in~\cite{DBLP:conf/wabi/HajirasoulihaR14} but whose complexity was left open. This problem asks to minimize the \emph{set} (instead of multi-set) of all tumor types of the perfect phylogeny.
The two problems, formally defined in Section~\ref{sec:problem-formulation}, are referred to as the
{\sc Minimum Conflict-Free Row Split} and the {\sc Minimum Distinct Conflict-Free Row Split} problem, respectively.

Moreover, we obtain a polynomial time algorithm for a collection of instances of the {\sc Minimum Conflict-Free Row Split} problem, which can be biologically characterized as follows. Say that two mutations $i$ and $j$ are \emph{exclusive} if $i$ is present in a sample in which $j$ is absent, and $j$ is present in a sample in which $i$ is absent. Observe that exclusive mutations cannot both be present in the same vertex of a perfect phylogeny. Thus, we say that a sample is a \emph{mixture} at exclusive mutations $i$ and $j$ if both $i$ and $j$ are present in that sample. The instances for which we can solve the problem in polynomial time are such that for any two exclusive mutations $i$ and $j$, no mutation is present only in the samples mixed at $i$ and $j$.

\begin{sloppypar}
We also propose an efficient (though not necessarily optimal) heuristic algorithm for the {\sc Minimum Conflict-Free Row Split} problem,
based on coloring co-comparability graphs, and provide implementations of both algorithms, freely available at \url{https://github.com/alexandrutomescu/MixedPerfectPhylogeny}.
\end{sloppypar}

\medskip
\noindent {\bf Paper outline.} In Section~\ref{sec:problem-formulation} we give all formal definitions and review the approach of~\cite{DBLP:conf/wabi/HajirasoulihaR14}. In Section~\ref{sec:characterization} we give a complete characterization of the so-called {\it row-conflict graphs},
the class of graphs considered in~\cite{DBLP:conf/wabi/HajirasoulihaR14}. The complexity results are presented in Section~\ref{sec:complexity-results}, and the above-mentioned polynomial time algorithm is given in Section~\ref{sec:no-containments-in-conflicting}.
In Section~\ref{sec:heuristic} we discuss the heuristic algorithm for general instances, and in Sections~\ref{sec:experiments} and~\ref{sec:discussion} we present experimental results on the binary matrices from clear cell renal cell carcinomas (ccRCC) from~\cite{Gerlinger:2014aa}.
We conclude the paper with a discussion in Section~\ref{sec:conclusion}.

Some of the results in this paper appeared in the proceedings of WABI 2015~\cite{WABI2015}.
In addition to the material presented in~\cite{WABI2015}, this paper contains all the missing proofs
(complete proofs of Theorems 2, 3, and 4, a more detailed proof of Lemma 2), a time complexity analysis of
the algorithm presented in Section 5, a discussion following the proof of Theorem 5 on the necessity of the assumptions
for the algorithm given in Section 5, and three additional sections (Sections 6, 7 and 8) describing a
polynomial time heuristic for general instances and experimental results.

\section{Problem formulation\label{sec:problem-formulation}}

As mentioned in the introduction, we assume that we have a set of sequencing reads from each tumor sample, and that based on these reads we have discovered the sample variants with respect to a reference (e.g., by using a somatic mutation caller such as VarScan~2~\cite{Koboldt02022012}). This gives rise to an $m \times n$ matrix $M$ whose $m$ rows are the different samples, and whose $n$ columns are the genome loci where a mutation was observed with respect to the reference. The entries of $M$ are either 0 or 1, with 0 indicating the absence of a mutation, and 1 indicating the presence of the mutation. We assume that the matrix has no row whose all entries are 0.

Under ideal conditions, e.g., each mutation was called without errors, and the samples do not contain reads from several leaves of the perfect phylogeny, $M$ corresponds to a perfect phylogeny matrix. Such matrices are characterizable by a simple property, called \emph{conflict-freeness}.

\begin{definition}
Two columns $i$ and $j$ of a binary matrix $M$ are said to be {\em in conflict}
if there exist three rows $r, r', r''$ of $M$ such that $M_{r,i} = M_{r,j}= 1$,
$M_{r',i}=M_{r'',j} = 0$, and $M_{r',j} = M_{r'',i}= 1$.
A binary matrix $M$ is said to be {\em conflict-free} if no two columns of $M$ are in conflict.
\end{definition}
\noindent It is well known that the rows of $M$ are leaves of a perfect phylogenetic tree if and only if $M$ is conflict-free (see~\cite{Estabrook:1975aa,gusfieldbook}). Moreover, if this is the case, then the corresponding phylogenetic tree can be retrieved from $M$
 in time linear in the size of $M$~\cite{Gusfield91}.

However, in practice,  each tumor sample is a mixture of reads from several tumor types, and thus possibly $M$ is not conflict-free. If we are not allowed to edit the entries of $M$ as done e.g. by methods such as \cite{Rens:aa}, \cite{DBLP:journals/jcb/SalariSHKNWSB13},
Hajirasouliha and Raphael proposed in~\cite{DBLP:conf/wabi/HajirasoulihaR14} to turn $M$ into a conflict-free matrix $M'$ by splitting each row $r$ of $M$ into some rows $r_1,\dots,r_k$ such that $r$ is the bitwise OR of $r_1,\dots,r_k$; that is, for every column $c$, $M_{r,c} = 1$ if and only if $M_{r_i,c} = 1$ for at least one $r_i$. The rows $r_1,\dots,r_k$ can be seen as the deconvolution of the mixed sample $r$ into samples from single vertices of a perfect phylogeny. One can then build the perfect phylogeny corresponding to $M'$ and carry further downstream analysis. Let us make this row split operation precise.

\begin{sloppypar}
\begin{definition}
Given a binary matrix $M\in \{0,1\}^{m\times n}$ with rows labeled $r_1,r_2,\ldots, r_m$, we say that a binary matrix $M'\in \{0,1\}^{m'\times n}$ is a {\em row split} of $M$ if there exists a partition of the set of rows of $M'$ into $m$ sets $R'_1,R'_2,\ldots,R'_m$ such that for all $i \in \{1,2,\ldots,m\}$, $r_i$ is  the bitwise OR of the binary vectors given by the rows of $R'_i$. The set $R_i'$ of rows of $M'$ is said to be a set of {\em split rows} of row $r_i$.
\end{definition}
\end{sloppypar}

Observe that a simple strategy for obtaining a conflict-free row split of $M$ is to split every row $r$ into as many rows as there are 1s in $r$, with a single 1 per row. While this might be an informative solution for some instances (cf.~also Corollary~\ref{cor:no-containments} on p.~\pageref{cor:no-containments}),
Hajirasouliha and Raphael proposed in~\cite{DBLP:conf/wabi/HajirasoulihaR14} as criterion for obtaining a meaningful conflict-free
row split $M'$ the requirement that the number of rows of $M'$ is minimum among all conflict-free row splits of $M$.

In this paper we consider the following problem, which we call {\sc Minimum Conflict-Free Row Split} problem.
For a binary matrix $M$, we denote by $\overline{\gamma}(M)$ the minimum number of rows
in a conflict-free row split $M'$ of $M$.
This notation is in line with notation $\gamma(M)$ used in~\cite{DBLP:conf/wabi/HajirasoulihaR14} to denote the minimum number of
{\it additional} rows in a conflict-free row split $M'$ of $M$, that is, $\gamma(M) = \overline{\gamma}(M)-m$, where $m$ is the number of rows of $M$.

\vbox{\begin{framed}
\noindent {\sc Minimum Conflict-Free Row Split}:\\
\noindent {\bf Input:} A binary matrix $M$, an integer $k$.\\
\noindent {\bf Question:} Is it true that $\overline{\gamma}(M)\leq k$?
\end{framed}}

\smallskip
The optimization version of the above problem (in which only a given subset of rows needs to be split) was called the Minimum-Split-Row problem in~\cite{DBLP:conf/wabi/HajirasoulihaR14}, however, all results from~\cite{DBLP:conf/wabi/HajirasoulihaR14} deal with the variant of the problem in which all rows need to be split (some perhaps trivially by setting $R_i' = \{r_i\}$), which is equivalent to the {\sc Minimum Conflict-Free Row Split} problem.

Given a binary matrix $M$ and a row $r$ of $M$, the {\it conflict graph of $(M,r)$} is the graph
$G_{M,r}$ defined as follows: with each entry $1$ in $r$, we associate a vertex in $G_{M,r}$, and
two vertices in $G_{M,r}$ are connected by an edge if and only if the corresponding
columns in $M$ are in conflict. Denoting by $\chi(G)$ the chromatic number of a graph $G$,
Hajirasouliha and Raphael proved in~\cite{DBLP:conf/wabi/HajirasoulihaR14}
the following lower bound on the value of $\overline{\gamma}(M)$:

\begin{lemma}\cite{DBLP:conf/wabi/HajirasoulihaR14}\label{lem:lower-bound}
Let $M$ be a binary matrix $M$ with a conflict-free row split $M'$.
Then, for every row $r_i$ of $M$ with a set $R_i'$ of split rows of $M'$, we have
$|R_i'|\ge \chi(G_{M,r_i})$.
\end{lemma}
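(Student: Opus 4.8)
The plan is to show that the restriction of $M'$ to the split rows $R_i'$ of $r_i$ already forces a proper coloring of $G_{M,r_i}$ with $|R_i'|$ colors; the bound $|R_i'|\ge\chi(G_{M,r_i})$ is then immediate from the definition of the chromatic number. Concretely, I would assign to each column $c$ with $M_{r_i,c}=1$ (equivalently, each vertex of $G_{M,r_i}$) a color equal to \emph{some} row of $R_i'$ that has a $1$ in column $c$. Such a row exists because $r_i$ is the bitwise OR of the rows in $R_i'$, so at least one of them carries the $1$ in column $c$; when several do, pick one arbitrarily. This gives a map from $V(G_{M,r_i})$ to $R_i'$, hence a coloring using at most $|R_i'|$ colors.

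The key step is to verify that this coloring is proper, i.e.\ that two vertices receiving the same color are non-adjacent in $G_{M,r_i}$. Suppose columns $i'$ and $j'$ (both $1$ in $r_i$) are assigned the same row $r^\ast\in R_i'$; then $M'_{r^\ast,i'}=M'_{r^\ast,j'}=1$. I claim $i'$ and $j'$ are not in conflict \emph{in $M'$}: a conflict would require rows $r',r''$ of $M'$ with $M'_{r',i'}=M'_{r'',j'}=0$ and $M'_{r',j'}=M'_{r'',i'}=1$, together with the row $r^\ast$ witnessing the two $1$'s — but that is precisely a conflict between columns $i'$ and $j'$ of $M'$, contradicting that $M'$ is conflict-free. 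Hence columns $i'$ and $j'$ are conflict-free in $M'$. The final link is that conflict-freeness in $M'$ implies conflict-freeness of the corresponding columns in $M$: since $M$ is obtained from $M'$ by taking bitwise ORs within each block $R_\ell'$, any three rows of $M$ witnessing a conflict between columns $i'$ and $j'$ would lift to three (not necessarily distinct) rows of $M'$ — a row in the block of the all-ones witness having $1$'s in both columns, and rows in the blocks of the two mixed witnesses, which must each still have the required $0$ in one column while some row of that block supplies the $1$ in the other — producing a conflict in $M'$. Therefore $i'$ and $j'$ are not in conflict in $M$, so they are non-adjacent in $G_{M,r_i}$, and the coloring is proper.

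The main obstacle is the last implication, that a conflict between two columns in $M$ forces a conflict between the same two columns in $M'$; one must be careful because the three witnessing rows of $M$ may come from the same block or from different blocks, and the $1$'s supplied by the OR within a block need not all sit in a single row of that block. The clean way around this is to argue contrapositively at the level of a fixed pair of columns: if columns $a,b$ are conflict-free in $M'$, then for the pair $(a,b)$ the restriction $M'|_{\{a,b\}}$ has no three rows of the forbidden pattern, equivalently the set of pairs $\{(M'_{r,a},M'_{r,b}) : r\}$ does not contain all of $(1,1),(0,1),(1,0)$; and taking ORs within blocks can only shrink this set of pairs (an OR of several $(\cdot,\cdot)$ pairs is coordinatewise $\ge$ each of them, and in particular if no row of a block has a $1$ in column $a$ then neither does the OR), so $M|_{\{a,b\}}$ also omits one of the three patterns and $a,b$ are conflict-free in $M$. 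Applying this with $(a,b)=(i',j')$ completes the argument, and summing over the coloring yields $|R_i'|\ge\chi(G_{M,r_i})$.
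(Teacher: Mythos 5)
Your overall strategy is the right one (and is the natural argument for this lemma, which the paper itself only cites from Hajirasouliha and Raphael without reproducing a proof): color each column $c$ with $M_{r_i,c}=1$ by a row of $R_i'$ carrying its $1$, and show this colors $G_{M,r_i}$ properly. But the justification of properness has a genuine error. The claim you reduce to --- ``if columns $a,b$ are conflict-free in $M'$ then they are conflict-free in $M$'' --- is false, and so is your ``clean way around'': taking bitwise ORs within a block does \emph{not} only shrink the set of patterns $\{(M'_{r,a},M'_{r,b})\}$, because a block containing a $(1,0)$ row and a $(0,1)$ row produces the new pattern $(1,1)$ in $M$. Concretely, let $M$ have rows $(1,1),(0,1),(1,0)$, whose two columns are in conflict, and let $M'$ be the row split obtained by splitting the first row into $(1,0)$ and $(0,1)$: then $M'$ is conflict-free while the pair is in conflict in $M$. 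The same issue invalidates the earlier sentence ``a row in the block of the all-ones witness having $1$'s in both columns'' --- such a row need not exist for an arbitrary all-ones witness.

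What saves the lemma, and what your write-up has available but never invokes at the decisive point, is the specific row $r^\ast\in R_i'$ with $M'_{r^\ast,i'}=M'_{r^\ast,j'}=1$ supplied by your color assignment. Argue directly: if $i'$ and $j'$ were in conflict in $M$, pick rows $r',r''$ of $M$ realizing the patterns $(0,1)$ and $(1,0)$ on $(i',j')$; since a $0$ in a row of $M$ forces a $0$ in every row of its block while a $1$ survives in some row of the block, the blocks of $r'$ and $r''$ contain rows of $M'$ with patterns $(0,1)$ and $(1,0)$, and together with $r^\ast$ serving as the $(1,1)$ witness these three rows exhibit a conflict between columns $i'$ and $j'$ of $M'$, contradicting conflict-freeness. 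Equivalently: because $r^\ast$ realizes $(1,1)$ in $M'$, the pattern missing from $M'$ on this pair must be $(0,1)$ or $(1,0)$, and \emph{those} two patterns do transfer downward from $M$ to $M'$, hence are also missing in $M$. With this repair the rest of your argument (at most $|R_i'|$ colors, hence $|R_i'|\ge\chi(G_{M,r_i})$) goes through.
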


\begin{corollary}\label{cor:WABI}
For every binary matrix $M$, we have $\overline{\gamma}(M)\ge \sum_{r}\chi(G_{M,r})$.
\end{corollary}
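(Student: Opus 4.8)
The plan is to derive the inequality directly from Lemma~\ref{lem:lower-bound} by summing the bound over all rows of $M$. First I would fix a conflict-free row split $M'$ of $M$ that attains the minimum, i.e.\ one having exactly $\overline{\gamma}(M)$ rows. Such a split exists: the family of conflict-free row splits of $M$ is nonempty (for instance, the split replacing each row $r$ by the rows having a single $1$, one for each position where $r$ has a $1$, is conflict-free, as noted in the discussion preceding the statement), and the number of rows of any row split is a positive integer, so a minimizer exists.

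Next I would invoke the definition of a row split applied to $M'$: there is a partition of the set of rows of $M'$ into sets $R'_1,\dots,R'_m$, where $R'_i$ is a set of split rows of row $r_i$ of $M$. Because these sets partition the rows of $M'$, we have $\overline{\gamma}(M) = \sum_{i=1}^m |R'_i|$. Applying Lemma~\ref{lem:lower-bound} to each index $i$ gives $|R'_i| \ge \chi(G_{M,r_i})$, and summing these $m$ inequalities yields
\[
\overline{\gamma}(M) \;=\; \sum_{i=1}^m |R'_i| \;\ge\; \sum_{i=1}^m \chi(G_{M,r_i}) \;=\; \sum_{r} \chi(G_{M,r}),
\]
which is precisely the claimed bound.

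I do not expect any genuine obstacle: this is a routine corollary of Lemma~\ref{lem:lower-bound}. The only points worth checking are that a minimum conflict-free row split is well defined (guaranteed by the existence of the trivial split and integrality of the row count) and that the sets $R'_1,\dots,R'_m$ really form a partition of the rows of $M'$, so that their cardinalities sum to the total number of rows of $M'$; both are immediate from the definition of a row split.
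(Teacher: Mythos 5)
Your proof is correct and follows exactly the argument the paper intends: the corollary is obtained by taking a minimum conflict-free row split, noting that the sets $R'_1,\dots,R'_m$ partition its rows, and summing the bound of Lemma~\ref{lem:lower-bound} over all rows (with existence of a minimizer guaranteed by the trivial one-$1$-per-row split). Nothing is missing.
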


Hajirasouliha and Raphael also claimed in~\cite{DBLP:conf/wabi/HajirasoulihaR14} the following hardness result.

\begin{theorem}\cite{DBLP:conf/wabi/HajirasoulihaR14}
\label{thm:WABI}
The {\sc Minimum Conflict-Free Row Split} problem is NP-hard.
\end{theorem}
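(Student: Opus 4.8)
The plan is to give a polynomial-time reduction from a known NP-complete problem to the decision version of the {\sc Minimum Conflict-Free Row Split} problem. A natural source is 3-SAT (or a convenient variant such as NAE-3-SAT or graph $3$-Coloring): from an instance $\varphi$ I would build a binary matrix $M_\varphi$ together with a target value $k_\varphi$, so that $\overline{\gamma}(M_\varphi)\le k_\varphi$ if and only if $\varphi$ is satisfiable. To do the bookkeeping of conflicts I would use the standard equivalent description of conflict-free matrices as precisely those whose column supports $\supp(c)=\{r: M_{r,c}=1\}$ form a laminar family (any two are disjoint or nested); this makes it easy to check conflict-freeness of a proposed split directly.

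The building block I would exploit is the elementary observation behind Lemma~\ref{lem:lower-bound}: if columns $i$ and $j$ are in conflict, then the ``$01$'' and ``$10$'' witnesses survive \emph{any} row split, so the only way to remove the conflict is to destroy the ``$11$'' pattern, i.e.\ every row with a $1$ in both $i$ and $j$ must be split so these two $1$s land in different split rows. Hence a split of a row $r$ projects to a proper coloring of $G_{M,r}$. The reduction would be designed so that, after these forced separations, the only remaining freedom in an optimal split is a bounded (ideally binary) choice per row, and these choices are coupled across rows through auxiliary columns nested inside a fixed pair of conflicting columns --- exactly the configuration excluded in the polynomially solvable case of Section~\ref{sec:no-containments-in-conflicting}. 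Concretely, for a conflicting pair $a,b$ and a column $c$ with $\supp(c)\subseteq\supp(a)\cap\supp(b)$, routing the $1$ of column $c$ in a given row together with the $1$ of $a$ or together with the $1$ of $b$ would encode a truth value of a variable; consistency of these routings across all rows of $\supp(c)$ (needed to avoid creating new $c$--$a$ or $c$--$b$ conflicts) propagates the value, and clause gadgets add further columns and rows that can be split within budget precisely when some literal is satisfied.

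Correctness then splits into two directions. For satisfiable $\Rightarrow$ cheap split, given a satisfying assignment I would write down an explicit conflict-free row split of $M_\varphi$ with exactly $k_\varphi$ rows and verify conflict-freeness through the laminar-family criterion. For cheap split $\Rightarrow$ satisfiable, I would use Corollary~\ref{cor:WABI} together with the per-row chromatic lower bound to argue that any split of cost at most $k_\varphi$ must be ``tight'' on every gadget, extract from such a split a consistent truth assignment from the routing choices, and show the clause gadgets force it to satisfy $\varphi$.

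The step I expect to be the main obstacle is controlling the \emph{global} interaction between rows: unlike plain graph coloring, splitting one row can simultaneously destroy old conflicts and create new ones in unrelated columns, so one must rule out ``non-local'' cheats --- for instance duplicating a column's $1$ into several split rows, or routing a $1$ through an extra split row created just below budget. Choosing $k_\varphi$ so that it is exactly tight everywhere, and proving a matching structural lemma that pins down the shape of every optimal split, is where the real work lies; a secondary technical point is that the whole construction must stay inside the class of realizable conflict graphs (cf.\ the characterization in Section~\ref{sec:characterization}), since the flaw in the original argument of~\cite{DBLP:conf/wabi/HajirasoulihaR14} was precisely assuming that every graph arises as some $G_{M,r}$.
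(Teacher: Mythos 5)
Your general framing is sound --- the laminar-family view of conflict-freeness is correct, the observation that the ``01''/``10'' witnesses of a conflict survive any row split (so every row must separate conflicting 1s, giving the coloring lower bound of Lemma~\ref{lem:lower-bound}) is exactly right, and you even put your finger on the right coupling mechanism: auxiliary columns contained in both members of a conflicting pair, i.e.\ precisely the configuration excluded in Section~\ref{sec:no-containments-in-conflicting}. But what you have written is a plan, not a proof. The entire content of an NP-hardness argument is the explicit construction of $M_\varphi$ and $k_\varphi$ and the two correctness directions, and all three are left as intentions: the variable ``routing'' gadget is only described in one sentence, the clause gadget is not specified at all (``columns and rows that can be split within budget precisely when some literal is satisfied'' is the statement of what a gadget should do, not a gadget), and the structural lemma that pins down every optimal split --- which you yourself identify as ``where the real work lies'' --- is exactly the part that cannot be waved at, because of the duplication and extra-split-row cheats you mention. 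In particular, nothing in the sketch yet rules out a split that places the 1 of an auxiliary column $c$ in a split row containing \emph{neither} $a$ nor $b$, or copies 1s of $a$ and $b$ into several split rows, so no truth value can currently be extracted from a cheap split. As it stands the proposal does not establish the theorem.

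For comparison, the paper proves this (Theorem~\ref{thm:hard}) by a reduction not from SAT but from $3$-edge-colorability of cubic graphs: $M$ has a row per vertex plus three anchor rows $r_1,r_2,r_3$, a column per edge plus three anchor columns $c_1,c_2,c_3$, and budget $k=3|V|+3=\sum_r\chi(G_{M,r})$ (each vertex row's conflict graph is two disjoint triangles). The anchor columns are pairwise conflicting and contain every edge column, which is the same coupling idea you proposed, but the graph structure itself supplies the ``clauses'': tightness of the budget forces each vertex row to split into exactly three rows that are, in effect, a bitwise \emph{sum} matching the anchors, and avoiding conflicts with $c_i$ forces the two endpoints of an edge to use the same index $i$ on that edge column, yielding a $3$-edge-coloring; the converse direction is an explicit three-row split per vertex checked to be conflict-free. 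These forcing arguments (OR must be sum, $|R_v'|=3$, endpoint agreement) are precisely the counterparts of the missing structural lemma in your sketch; if you want to salvage your SAT route, you would need to supply gadget constructions and an analogous forcing analysis of comparable precision.
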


To recall their approach for proving Theorem~\ref{thm:WABI}, we need one more definition.
We denote the fact that two graphs $G$ and $H$ are isomorphic by $G\cong H$.

\begin{definition}\label{def:RC}
A graph $G$ is a {\em row-conflict graph} if there exists a binary matrix $M$ and a row $r$ of $M$ such that $G\cong G_{M,r}$.
\end{definition}

The proof of Theorem~\ref{thm:WABI} was based on a reduction from the chromatic number problem in graphs and relied on three ingredients: the lower bound given by Corollary~\ref{cor:WABI},
Theorem 4 from~\cite{DBLP:conf/wabi/HajirasoulihaR14} stating that every graph is a row-conflict graph,
and an algorithm based on graph coloring, also proposed in~\cite{DBLP:conf/wabi/HajirasoulihaR14},
for optimally solving the {\sc Minimum Conflict-Free Row Split} problem by constructing a conflict-free row split of $M$
with exactly $\sum_{r}\chi(G_{M,r})$ rows. In particular, their results would imply that the lower bound on $\overline{\gamma}(M)$ given by Corollary~\ref{cor:WABI}
is always attained with equality.

Contrary to what was claimed in~\cite{DBLP:conf/wabi/HajirasoulihaR14}, we show that there exist graphs that are not row-conflict graphs. In fact, we give a complete characterization of row-conflict graphs, showing that a graph is a row-conflict graph if and only if its complement is transitively orientable (see \Cref{thm:GMr}). Using a reduction from $3$-edge-colorability of cubic graphs, we show that it is NP-complete to test whether a given binary matrix $M$ has a conflict-free row split $M'$ with number of rows achieving the lower bound given by
Corollary~\ref{cor:WABI} (see \Cref{thm:hard}). This implies that there exist infinitely many matrices for which this bound is not achieved.

A corollary of our characterization of row-conflict graphs is that the chromatic number is polynomially computable for this class of graphs.
This fact with the assumption that P\,$\neq$\,NP, as well as the existence of matrices $M$ with  $\overline{\gamma}(M)> \sum_{r}\chi(G_{M,r})$, each individually imply that the claimed NP-hardness proof of the {\sc Minimum Conflict-Free Row Split} problem given in~\cite{DBLP:conf/wabi/HajirasoulihaR14} is flawed.
Nevertheless, our NP-completeness proof (see Theorem~\ref{thm:hard}) implies that Theorem~\ref{thm:WABI} is correct.

On the positive side, we give a polynomial time algorithm for the {\sc Minimum Conflict-Free Row Split} problem on input matrices $M$ in which
no column is contained in both columns of a pair of conflicting columns (see Theorem~\ref{thm:poly}).

We also consider a variant of the problem, also proposed in~\cite{DBLP:conf/wabi/HajirasoulihaR14}, in which we are only interested in
minimizing the number of {\it distinct} rows in a conflict-free row split of $M$. This problem is similar to the Minimum Perfect Phylogeny Haplotyping problem~\cite{DBLP:journals/jcb/BafnaGHY04}, in which we need to explain a set of genotypes with a minimum number of haplotypes admitting a perfect phylogeny. For a binary matrix $M$, we denote by $\overline{\eta}(M)$ the minimum number of {\it distinct} rows in a conflict-free row split $M'$ of $M$. We establish NP-completeness of the following problem (see Theorem~\ref{thm:hard2}), which was left open in~\cite{DBLP:conf/wabi/HajirasoulihaR14}.

\vbox{\begin{framed}
\noindent {\sc Minimum Distinct Conflict-Free Row Split}:\\
\noindent {\bf Input:} A binary matrix $M$, an integer $k$.\\
\noindent {\bf Question:} Is it true that $\overline{\eta}(M)\leq k$?
\end{framed}}

\section{A characterization of row-conflict graphs\label{sec:characterization}}

\begin{definition}\label{def:H_M}
Given a binary matrix $M$ and two columns $i$ and $j$ of $M$, column $i$ is said to be {\em contained in} column $j$ if
 $M_{k,i}\le M_{k,j}$ holds for every $k$. The  {\em undirected containment graph} $H_M$ is the undirected graph
whose vertices correspond to the columns of $M$ and in which two vertices $i$ and $j$, $i\neq j$, are adjacent if and only if the column corresponding to vertex $i$ is contained in the column corresponding to vertex $j$ or vice versa.
\end{definition}

Recall that an {\it orientation} of an undirected graph $G=(V,E)$ is a directed graph $D=(V,A)$ such that for every edge $uv\in E$, either
$(u,v)\in A$ or $(v,u)\in A$, but not both. An orientation is said to be {\it transitive} if the presence of the directed edges $(u,v)$ and $(v,w)$ implies the presence of the directed edge $(u,w)$.  A graph is said to be {\it transitively orientable} if it has a transitive orientation.
The {\it complement} of a graph $G$ is a graph $\overline{G}$ with the same vertex set as $G$ in which two distinct vertices are adjacent if and only if they are non-adjacent in $G$. Transitively orientable graphs appeared in the literature under the name of \textit{comparability graphs} (and their complements under the name of \textit{co-comparability graphs}). Transitively orientable graphs and their complements form a subclass of the well known class of perfect graphs~\cite{MR2063679}. Therefore, odd cycles of length at least $5$ and their complements are examples of graphs that are not transitively orientable.

\begin{observation}\label{obs:transitive}
For every binary matrix $M$, the graph $H_M$ is transitively orientable.
\end{observation}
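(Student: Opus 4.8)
The plan is to exhibit an explicit transitive orientation of $H_M$. For two columns $i$ and $j$ of $M$, write $i \preceq j$ to mean that column $i$ is contained in column $j$, i.e.\ $M_{k,i} \le M_{k,j}$ for every row $k$. This relation is clearly reflexive and transitive, i.e.\ a preorder on the columns, and by definition two distinct vertices $i,j$ are adjacent in $H_M$ if and only if $i \preceq j$ or $j \preceq i$. The naive idea of orienting each edge $ij$ from $i$ to $j$ whenever $i \preceq j$ fails only because $\preceq$ need not be antisymmetric: if columns $i$ and $j$ are equal, then both $i \preceq j$ and $j \preceq i$ hold, so both directed pairs $(i,j)$ and $(j,i)$ would be selected.

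To repair this I would break ties with the natural order on column indices: orient an edge $ij$ of $H_M$ from $i$ to $j$ exactly when $i \preceq j$ and, moreover, either columns $i$ and $j$ are distinct, or $i < j$. First I would verify that this is a genuine orientation, i.e.\ that for each edge $ij$ precisely one of the two directed pairs is chosen: if the columns are distinct, then at most one of $i \preceq j$, $j \preceq i$ can hold (the other would force equality), and since $ij$ is an edge exactly one does; if the columns are equal, the edge is oriented according to whether $i < j$ or $j < i$. Then I would check transitivity: from $i \to j$ and $j \to k$ we get $i \preceq j \preceq k$, hence $i \preceq k$; one rules out $i = k$ (it would force columns $i,j$ equal and hence both $i<j$ and $j<i$), so $ik \in E(H_M)$, and the edge $ik$ is oriented $i \to k$ — immediately if columns $i,k$ differ, and otherwise because $i \preceq j \preceq k = i$ forces columns $i,j,k$ all equal, giving $i<j<k$.

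I expect no real obstacle; the argument is essentially the standard fact that the comparability graph of a partial order is transitively orientable, applied to the partial order obtained from the containment preorder $\preceq$ by ordering each class of mutually equal columns by index. The only point that needs a moment of care is that $\preceq$ is a preorder rather than a partial order, which is handled uniformly by the index tie-break, together with excluding the degenerate case $i=k$ in the transitivity check.
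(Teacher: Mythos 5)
Your orientation is exactly the one the paper uses: proper containment ordered by containment, equal columns tie-broken by index (the paper's relation $\sqsubset$), and your transitivity check fills in the verification the paper leaves to the reader. Correct, and essentially the same approach.
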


\begin{proof}
We say that column $i$ is {\em properly contained in} column $j$ if
$i$ is contained in $j$ and $M_{k,i} < M_{k,j}$ for some $k$.
Fix an ordering $\{c_1,\ldots, c_n\}$ of the columns of $M$.
Let us define a binary relation $\sqsubset$ on the set of columns on $M$ by setting,
for every two columns $c_i$ and $c_j$ of $M$, $c_i \sqsubset c_j$ if and only if
either $c_i$ is properly contained in $c_j$, or
$i<j$ and each of $c_i$ and $c_j$ is contained in the other one (that is, as binary vectors they are the same).
Observe that for a pair of columns $c_i$ and $c_j$ with $c_ic_j\in E(H_M)$
we have either $c_i\sqsubset c_j$ or $c_j\sqsubset c_i$ but not both.
The binary relation $\sqsubset$ defines an orientation of $H_M$, by orienting each edge $c_ic_j$ as going from $c_i$ to $c_j$ if and only if $c_i \sqsubset c_j$. This orientation can be easily verified to be transitive.
\end{proof}

In the next theorem, we characterize row-conflict graphs (cf.~Definition~\ref{def:RC}).

\begin{theorem}\label{thm:GMr}
A graph $G$ is a row-conflict graph if and only if $\overline{G}$ is
transitively orientable.
\end{theorem}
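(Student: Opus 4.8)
The plan is to prove both directions by relating the conflict graph $G_{M,r}$ to containment relations among the columns indexed by the $1$-entries of $r$. First I would set up the following key observation about conflicts restricted to a single row $r$: if $i$ and $j$ are two columns with $M_{r,i}=M_{r,j}=1$, then $i$ and $j$ are \emph{not} in conflict if and only if one of the columns $i,j$, after deleting the rows $r'$ with $M_{r',i}=M_{r',j}=1$ (equivalently, restricting attention to the rows where they differ or are both $0$), is "essentially contained" in the other. More precisely, for columns both having a $1$ in row $r$, being conflict-free amounts to: there is no row $r'$ with $M_{r',i}=1,M_{r',j}=0$ and simultaneously no such witness, OR... — I would phrase it cleanly as: $i$ and $j$ are non-adjacent in $G_{M,r}$ iff the set of rows where column $i$ has a $1$ and column $j$ has a $0$ is empty, or the symmetric set is empty, or these two sets together with the common-$1$ set cannot all be nonempty; the upshot is that non-adjacency in $G_{M,r}$ corresponds exactly to a containment-like (comparability) relation. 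Hence $\overline{G_{M,r}}$ is, up to isomorphism, an induced subgraph of something like $H_{M'}$ for a suitable submatrix/modification $M'$, and by Observation~\ref{obs:transitive} and the fact that induced subgraphs of transitively orientable graphs are transitively orientable, $\overline{G_{M,r}}$ is transitively orientable. This gives the forward direction.

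For the converse, suppose $\overline{G}$ is transitively orientable; I want to build a binary matrix $M$ and a row $r$ with $G_{M,r}\cong G$. Fix a transitive orientation of $\overline{G}$, so the non-edges of $G$ are oriented transitively; this is a partial order $\prec$ on $V(G)$ whose comparability graph is $\overline{G}$ (i.e., $u,v$ are comparable under $\prec$ iff $uv\notin E(G)$). The construction: let the columns of $M$ be indexed by $V(G)$. Include one "top" row $r$ that is all $1$s (this forces every pair of columns to have the common-$1$ witness needed for a potential conflict). Then, for each vertex $v$, encode its down-set: I would add, for each $v$, a gadget of rows that make column $v$'s $1$-pattern reflect the order, so that for two columns $u,v$: if $u\prec v$ then column $u$ is contained in column $v$ restricted to the added rows (no conflict, since then the "$u$ has $1$, $v$ has $0$" witness is missing) and if $u,v$ are incomparable then there exist rows giving both a "$u$ has $1$, $v$ has $0$" witness and a "$v$ has $1$, $u$ has $0$" witness, producing a conflict. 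Concretely, for each pair of incomparable vertices $u,v$ I would insert two rows, one with $1$s exactly on columns in the up-set of $u$ (or some set containing $u$ but not $v$) and the analogous one for $v$; and for comparabilities I ensure monotonicity along chains. One clean way: for each vertex $v$ add one row $\rho_v$ with $M_{\rho_v,w}=1$ iff $w\not\succ v$ (or an appropriate choice) — then checking the conflict condition against the all-ones row $r$ reduces exactly to comparability of $u$ and $v$. I would verify that with this choice, columns $u$ and $v$ are in conflict in $M$ iff $uv\in E(G)$, so $G_{M,r}\cong G$.

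The main obstacle I anticipate is getting the converse construction exactly right: the conflict relation is not literally the complement of a containment relation — it is the complement of the \emph{comparability} relation only after fixing the all-ones row and being careful about ties/equal columns and about the asymmetry in the definition of conflict (three rows $r',r''$ with a specific pattern). I need to make sure that (a) every comparable pair genuinely has no conflict — this requires that along the added rows the containment is preserved, which is why using down-sets/up-sets of a linear extension or of $\prec$ itself is natural; and (b) every incomparable pair genuinely has a conflict — this requires explicitly exhibiting the two rows $r',r''$, which is where a dedicated pair of rows per incomparable pair, or per vertex, is safest. A secondary subtlety is ensuring that the transitivity of the chosen orientation is actually \emph{used}: it is needed so that the "no conflict" relation is consistent (if $u$ is below $v$ and $v$ below $w$ then $u$ is below $w$, matching that containment is transitive), i.e., so that the chosen non-edges really can all be simultaneously realized as non-conflicts; without transitivity an arbitrary orientation of the non-edges of $G$ need not be realizable. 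Once these checks are in place, both directions combine to give the theorem.
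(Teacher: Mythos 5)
Your overall strategy matches the paper's on both directions. For the forward direction your key observation is exactly the one the paper uses: since both columns have a $1$ in row $r$, the common-$1$ witness is automatic, so two such columns are \emph{not} in conflict if and only if one is contained in the other; hence $\overline{G_{M,r}}$ is precisely the undirected containment graph $H_N$ of the submatrix $N$ formed by the columns with a $1$ in row $r$, and Observation~\ref{obs:transitive} finishes the argument (you do not even need the detour through deleting common-$1$ rows or through induced subgraphs). For the converse the paper also starts from an all-ones row and a transitive orientation of $\overline{G}$, but it adds, for each edge $v_iv_j$ of $G$, two explicit witness rows creating the conflict, and then fills in the remaining entries greedily in topological order so that containment holds along every oriented non-edge; your alternative of encoding the order directly---one row per vertex $v$ (or per endpoint of an edge) whose $1$s are exactly the up-set $\{w : v \preceq w\}$---is a clean variant that does work: for $u\prec w$ transitivity gives column $u$ contained in column $w$ (no conflict), and for incomparable $u,w$ the rows for $u$ and $w$ together with the all-ones row give the conflict, so $G_{M,r}\cong G$. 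This version is arguably easier to verify than the paper's fill-in step, and you correctly identified where transitivity is genuinely needed.

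One concrete caution: the other encoding you float, $M_{\rho_v,w}=1$ iff $w\not\succ v$, does \emph{not} work. With that choice a row $\rho_v$ has a $0$ in column $w$ only when $v\prec w$, so if $u$ and $w$ are incomparable (an edge of $G$) and $w$ is a minimal element of the order, column $w$ has a $1$ in every added row; then column $u$ is contained in column $w$ and the required conflict for the edge $uw$ never appears, so $G_{M,r}\not\cong G$. So in writing this up you must commit to the up-set (equivalently, down-set incidence) encoding, or to the paper's per-edge witnesses plus monotone fill-in; the hedged ``appropriate choice'' has to be resolved in that direction.
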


\begin{proof}
($\Rightarrow$) Let $M$ be an arbitrary binary matrix, $r$ an arbitrary row of $M$, and let $G=G_{M,r}$. Let $N$ be the submatrix of $M$ consiting of the columns of $M$ that have $1$ in row $r$. It is now easy to see that ${G}_{M,r}\cong {G}_{N,r}$. Moreover, any two columns of $N$ are either in conflict or their corresponding vertices are adjacent in $H_N$. Therefore, $H_N\cong \overline{{G}_{N,r}}$. Since $H_N$ is transitively orientable (by Observation~\ref{obs:transitive}), it follows that $\overline{G}$ is transitively orientable as well.

($\Leftarrow$) We follow the strategy of the proof of Theorem 4 in~\cite{DBLP:conf/wabi/HajirasoulihaR14} (which works for complements
of transitively orientable graphs). For the sake of completeness, we include here a short proof of this implication.
Let $G$ be a graph such that $H=\overline{G}$ is transitively orientable, with a transitive orientation $\overrightarrow{H}$.
It can be easily seen that $\overrightarrow{H}$ is acyclic, thus we may assume that vertices of $G$ are topologically ordered as $V(G)=\{v_1,\ldots,v_n\}$, that is,
for every directed edge $(v_i,v_j)$ in $\overrightarrow{H}$, we have $i<j$.
Let $E(G)=\{e_1,e_2,\ldots, e_m\}$. We construct a matrix $M$ with $n$ columns and $2m+1$ rows, such that ${G}_{M,1}\cong G$. The first row of $M$ is defined to have all entries equal to 1. For every edge $e_k = v_iv_j$, $i<j$, of $G$, the $2k$-th row of $M$ has entry $0$ in the column corresponding to vertex $v_i$, and entry $1$ in the column corresponding to $v_j$. Additionally, the $(2k+1)$-st row of $M$ has entry $1$ in the column corresponding to vertex $v_i$, and entry $0$ in the column corresponding to $v_j$. Since the first row has all entries equal to 1, after filling in these entries of $M$, the two columns corresponding to $v_i$ and $v_j$, respectively, are in conflict.

We need to fill in the remaining entries of $M$ so that we do not introduce any new conflicts. For every $i$, we fill in the remaining entries so that whenever $(v_i,v_j)$ is a directed edge in $\overrightarrow{H}$, the column corresponding to the vertex $v_i$ is contained in the column corresponding to the vertex $v_j$. This can be achieved by examining the columns one by one, following the topological order $(v_1,\ldots, v_n)$ of $\overrightarrow{H}$, and filling each unfilled entry with a $0$, unless this would violate the above containment principle.

At the end of this procedure, there are no conflicts between columns corresponding to vertices $v_i$ and $v_j$, whenever $(v_i,v_j)$ is a directed edge in $\overrightarrow{H}$. Therefore, ${G}_{M,1}\cong G$.
%
\end{proof}

Theorem~\ref{thm:GMr} implies that odd cycles of length at least $5$ and their complements are not row-conflict graphs.
The reader not familiar with transitively orientable graphs might find it useful to verify that the cycle of length $5$ cannot be transitively oriented.

\section{Complexity results\label{sec:complexity-results}}

\begin{theorem}\label{thm:hard}
The following two problems are NP-complete:
\begin{itemize}
  \item The {\sc Minimum Conflict-Free Row Split} problem.
  \item Given a binary matrix $M$, is it true that $\overline{\gamma}(M) = \sum_{r}\chi(G_{M,r})$?
\end{itemize}
\end{theorem}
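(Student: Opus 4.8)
The plan is to establish NP-completeness of both problems via a single reduction from a suitable NP-complete graph problem, exploiting the lower bound from Corollary~\ref{cor:WABI} and the construction used in the $(\Leftarrow)$ direction of Theorem~\ref{thm:GMr}. Membership in NP is routine for both: a conflict-free row split $M'$ with at most $k$ rows is a polynomial-size certificate, and conflict-freeness is checkable in polynomial time (for the second problem one additionally needs to verify $\overline{\gamma}(M)\le\sum_r\chi(G_{M,r})$ given such an $M'$, together with the matching lower bound, which always holds). So the work is in hardness. Since the paper's footnote and surrounding discussion flag $3$-edge-colorability of cubic graphs, I would reduce from that: given a cubic graph $H$, I want to build a matrix $M$ such that deciding whether $\overline{\gamma}(M)$ meets its lower bound (equivalently, is at most a specific threshold) is equivalent to deciding whether $H$ is $3$-edge-colorable. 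The natural device is the line graph: a proper $k$-edge-coloring of $H$ is a proper $k$-vertex-coloring of $L(H)$, so $\chi(L(H))\le 3$ iff $H$ is $3$-edge-colorable (and $\chi(L(H))\ge 3$ for any cubic graph with an edge, by Vizing one has $\chi'(H)\in\{3,4\}$).

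The key steps, in order, are as follows. First, given cubic $H$, form $G := L(H)$ and realize $G$ as a conflict graph: since we need $\overline{G}$ to be transitively orientable to apply Theorem~\ref{thm:GMr} directly, and line graphs of cubic graphs need not be co-comparability graphs, I instead build $M$ directly so that $G_{M,r}\cong G$ for a single distinguished all-ones row $r$, using a variant of the construction in the proof of Theorem~\ref{thm:GMr}: one all-ones row, two rows per edge of $G$ forcing each edge of $G$ into conflict, and then filling the remaining entries to avoid spurious conflicts. The delicate point is that the Theorem~\ref{thm:GMr} construction as written assumes a transitive orientation of $\overline{G}$ to guide the completion; for a general $G$ I would instead add auxiliary rows/columns so that \emph{every} pair of columns corresponding to non-adjacent vertices of $G$ is put into a containment relation, at the cost of enlarging the matrix polynomially, so that no unwanted conflict arises. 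Second, arrange that $M$ has only this one non-trivial row (all other rows have a single $1$, hence split trivially and contribute conflict graphs with $\chi=1$), so that $\overline{\gamma}(M) = (\text{number of single-}1\text{ rows}) + \overline{\gamma}$ restricted to $r$, and by Lemma~\ref{lem:lower-bound} the restricted contribution is at least $\chi(G_{M,r})=\chi(G)$. Third, show the upper bound: a proper $\chi(G)$-coloring of $G_{M,r}$ yields a conflict-free row split of $r$ into exactly $\chi(G)$ rows — this is exactly the "algorithm based on graph coloring" of Hajirasouliha–Raphael, which does work for a \emph{single} row (the flaw in their argument was only in combining colorings across rows and in the claim that every graph is a conflict graph, neither of which affects the single-row case). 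Hence $\overline{\gamma}(M)=\sum_r\chi(G_{M,r})$ always holds for this $M$, and moreover $\overline{\gamma}(M)\le k$ for the threshold $k$ corresponding to $\chi(G)\le 3$ iff $\chi(L(H))\le 3$ iff $H$ is $3$-edge-colorable. Since the construction forces the lower bound to be tight, the same reduction simultaneously shows that deciding $\overline{\gamma}(M)\le k$ is NP-hard, which gives the first bullet. For the second bullet — deciding $\overline{\gamma}(M)=\sum_r\chi(G_{M,r})$ — I would use a small modification: take two disjoint copies of the gadget, or attach to $M$ a separate block $M_0$ which is forced to have $\overline{\gamma}(M_0)>\sum\chi$ precisely when the line-graph block fails to be $3$-colorable, so that the equality holds for the combined matrix iff $H$ is $3$-edge-colorable; alternatively, observe that once we can force known non-tight instances (via an odd hole $C_5$, which by Theorem~\ref{thm:GMr} is not even a row-conflict graph, or a construction giving $\overline{\gamma}>\sum\chi$), we can glue a $C_5$-type gadget conditioned on the $3$-coloring decision.

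The main obstacle I anticipate is the second step of the construction: realizing an arbitrary line graph $L(H)$ as a conflict graph $G_{M,r}$ with $M$ of polynomial size and, crucially, \emph{without introducing any conflicts between columns beyond the prescribed edges of $G$}, while also keeping all other rows trivial so that the lower bound decomposes cleanly. The completion argument in Theorem~\ref{thm:GMr} relies on transitive orientability of $\overline{G}$, which fails here; the fix requires carefully designing extra rows that impose enough containments to kill every non-edge conflict, and then double-checking that these extra rows do not themselves create conflicts (one must verify that each added row has at most one "$0$ then $1$" inversion pattern, or is subsumed by containments). Getting this bookkeeping right — proving both "$G_{M,r}$ has exactly the edges of $G$" and "no other pair of columns of $M$ is in conflict" — is where the real care lies; everything else is either the standard lower bound (Lemma~\ref{lem:lower-bound}) or the standard single-row coloring algorithm for the matching upper bound.
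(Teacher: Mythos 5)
Your reduction has a fatal structural flaw at exactly the step you flag as ``delicate'': realizing $G=L(H)$ as a single-row conflict graph $G_{M,r}$ is impossible for the instances that matter. By Theorem~\ref{thm:GMr}, $G_{M,r}$ is \emph{always} the complement of a transitively orientable graph, so $L(H)$ can be realized only if $\overline{L(H)}$ is a comparability graph. But any cubic $H$ with an induced cycle of length $5$ (in particular the Petersen graph, the canonical non-$3$-edge-colorable cubic graph, which any correct reduction must handle) yields an induced $C_5$ in $L(H)$, hence an induced $C_5$ in $\overline{L(H)}$, which is therefore not transitively orientable. Your proposed repair --- adding auxiliary rows/columns so that every pair of columns corresponding to a non-edge of $G$ is in containment --- cannot work: containment between columns induces a transitive orientation of the complement of the conflict graph (this is Observation~\ref{obs:transitive} together with the forward direction of Theorem~\ref{thm:GMr}), so if you succeeded you would have certified that $\overline{L(H)}$ is transitively orientable, a contradiction. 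More globally, any reduction that locates the hardness in $\chi(G_{M,r})$ of a single distinguished row is doomed under P\,$\neq$\,NP: since every $G_{M,r}$ is a co-comparability graph, $\sum_r\chi(G_{M,r})$ is polynomial-time computable, which is precisely the reason the paper gives for the original Hajirasouliha--Raphael NP-hardness argument being flawed; your plan essentially reproduces that argument.

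The paper's actual proof also reduces from $3$-edge-colorability of cubic graphs but inverts the roles: it makes every per-row coloring trivial and puts the hardness in the \emph{consistency} of the row splits. Given cubic $G=(V,E)$, the matrix $M$ is the vertex--edge incidence matrix bordered by three columns $c_1,c_2,c_3$ (all ones on vertex rows) and three rows $r_1,r_2,r_3$ (zero on edge columns, identity on $c_1,c_2,c_3$), with $k=3|V|+3$. Each $G_{M,v}$ is a disjoint union of two triangles and each $G_{M,r_i}$ a single vertex, so $k=\sum_r\chi(G_{M,r})$ by inspection, which ties the two bullets together; then one shows $\overline{\gamma}(M)\le k$ iff the lower bound of Corollary~\ref{cor:WABI} is attained iff each vertex row can be split into three rows pairing one incident edge with one $c_i$ in a globally conflict-free way, which is equivalent to partitioning $E$ into three matchings. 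If you want to salvage your write-up, this is the shift you need: do not try to encode a hard coloring instance into one conflict graph (impossible), but engineer cheap per-row colorings whose compatibility across rows encodes the NP-hard choice.
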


\begin{proof}
The {\sc Minimum Conflict-Free Row Split} problem is in NP, since testing if a given binary matrix $M'$ with at most $k$ rows, equipped with a partition of its rows into $m$ sets, satisfies the condition in the definition of a row split, as well as the conflict-freeness, can be done in polynomial time. To argue that the second problem is in NP, we proceed similarly as above, performing an additional test checking that the number of rows of $M'$ equals $\sum_{r}\chi(G_{M,r})$. (In this case, we will have $\overline{\gamma}(M) \le \sum_{r}\chi(G_{M,r})$ and equality will follow from~Corollary~\ref{cor:WABI}.) The value of $\sum_{r}\chi(G_{M,r})$ can be computed in polynomial time,
since each graph $G_{M,r}$ is the complement of a transitively orientable graph (by Theorem~\ref{thm:GMr}), and the chromatic number of complements of transitively orientable graphs can be computed in polynomial time (see, e.g.,~\cite{MR2063679}).

We prove hardness of both problems at once, making a reduction from the following NP-complete problem~\cite{MR635430}:
Given a simple cubic graph $G = (V,E)$, is $G$ $3$-edge-colorable?
(A graph is {\it cubic}, or {\it $3$-regular}, if every vertex is incident with precisely three edges.
A {\it matching} in a graph is a set of pairwise disjoint edges.
A graph is {\it $3$-edge-colorable} if its edge set can be partitioned into $3$
matchings.)

Given a simple cubic graph $G = (V,E)$, we construct an instance $(M,k)$ of the
{\sc Minimum Conflict-Free Row Split} problem as follows:
\begin{itemize}
  \item $M$ is a $(|V|+3)\times (|E|+3)$ binary matrix, with rows indexed by $V\cup\{r_1,r_2,r_3\}$,
  columns indexed by $E\cup\{c_1,c_2,c_3\}$, and entries defined as follows
(see Fig.~\ref{fig:example} for an example):
  \begin{itemize}
    \item For every row indexed by a vertex $v\in V$ and every column indexed by an edge $e$, we have $$M_{v,e} = \left\{
                                                                                                                  \begin{array}{ll}
                                                                                                                    1, & \hbox{if $v$ is an endpoint of $e$;} \\
                                                                                                                    0, & \hbox{otherwise.}
                                                                                                                  \end{array}
                                                                                                                \right.$$
    \item For every row indexed by a vertex $v\in V$ and every column indexed by some $c\in \{c_1,c_2,c_3\}$, we have $M_{v,c} = 1$.
    \item For every row indexed by some $r\in \{r_1,r_2,r_3\}$ and every column indexed by an edge $e\in E$, we have $M_{r,e} = 0$.
    \item For every row indexed by some $r_i\in \{r_1,r_2,r_3\}$ and every column indexed by some $c_j\in \{c_1,c_2,c_3\}$,
    we have $$M_{r_i,c_j} = \left\{
                             \begin{array}{ll}
                               1, & \hbox{if $i = j$} \\
                               0, & \hbox{otherwise.}
                             \end{array}
                           \right.$$
      \end{itemize}
  \item $k = 3|V|+3$.
\end{itemize}

\begin{figure}[!ht]
  \begin{center}
\includegraphics[width=0.66\linewidth]{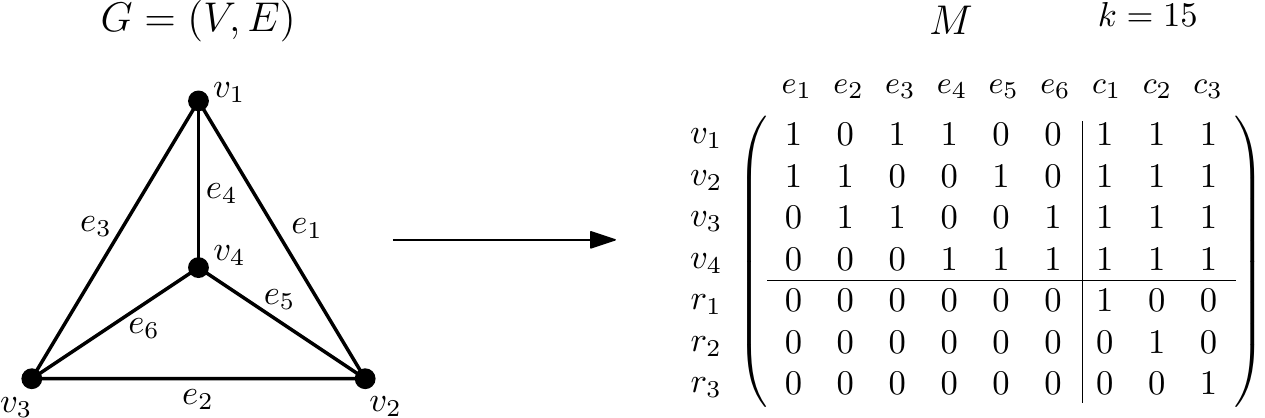}
  \end{center}
\caption{An example construction of $(M,k)$ from $G$.} \label{fig:example}
\end{figure}

Note that for each row indexed by a vertex $v\in V$, the graph $G_{M,v}$ is isomorphic to the disjoint union of
two complete graphs with three vertices each, hence $\chi(G_{M,v})= 3$.
For each row indexed by some $r\in \{r_1,r_2,r_3\}$, the graph $G_{M,r}$ consists in a single vertex, thus
$\chi(G_{M,r})= 1$. It follows that $k = \sum_{r}\chi(G_{M,r})$ and therefore $M$ is a yes instance to the
second problem (``Given a binary matrix $M$, is $\overline{\gamma}(M) = \sum_{r}\chi(G_{M,r})$?'') if and only if
$(M,k)$ is a yes instance for the {\sc Minimum Conflict-Free Row Split} problem.
Hardness of both problems will therefore follow from the following claim, which we prove next:
{\it $G$ is $3$-edge-colorable if and only if $\overline{\gamma}(M)\leq k$. }

Suppose first that $G$ is $3$-edge-colorable, and let $E = E_1\cup E_2\cup E_3$ be a partition of $E$ into $3$  matchings.
We obtain a row split $M'$ of $M$ by replacing each row of $M$ indexed by a vertex $v\in V$ with three rows
and keeping each row of $M$ indexed by some $r\in \{r_1,r_2,r_3\}$ unchanged.
Clearly, this will result in a matrix with $k$ rows.
For every $v\in V$, we replace the row of $M$ indexed by $v$ as follows.
Vertex $v$ is incident with precisely three edges in $G$, say $e_1,e_2,e_3$.
Since $E_1,E_2,E_3$ are matchings partitioning $E$, we may assume, without loss of generality, that $e_i\in E_i$ for all $i\in\{1,2,3\}$.
The three rows replacing in $M'$ the row of $M$ indexed by $v$
are indexed by $v^1$, $v^2$, $v^3$ and defined as follows:
for every $i\in \{1,2,3\}$ and every column $c\in E\cup\{c_1,c_2,c_3\}$, we have
$$M'_{v^i,c} = \left\{
                \begin{array}{ll}
                  1, & \hbox{if $c = e_i$ or $c = c_i$;} \\
                  0, & \hbox{otherwise.}
                \end{array}
              \right.$$
By construction, $M'$ is a row split of $M$ with $k$ rows. We claim that $M'$ is conflict-free.
No pair of columns indexed by two edges in $E$ agree on value $1$ in any row, hence they cannot be in conflict.
The same holds for any pair of columns indexed by two elements of $\{c_1,c_2,c_3\}$.
Consider now two columns, one indexed by an edge $e\in E$ and one indexed by $c_i\in \{c_1,c_2,c_3\}$.
Without loss of generality, we may assume that $e\in E_1$.
There are only two rows in which the column indexed by $e$ has value $1$, namely the rows indexed by copies of the endpoints of $e$,
say $u^1$ and $v^1$ (with $u,v\in V$).
The values of $M'$ in column $c_i$ at rows $u^1$ and $v^1$ are both $1$  (if $i =1$), otherwise they are both $0$.
Consequently, the two columns cannot be in conflict.
 Since $M'$ is a conflict-free row split of $M$ with $k$ rows, this establishes $\overline{\gamma}(M) \le k$.

For the converse direction, let $M'$  be a conflict-free row split of $M$ with at most $k$ rows.
Let $V' = V\cup \{r_1,r_2,r_3\}$ and consider a partition $\{R'_i\mid i\in V'\}$ of the set of rows of $M'$ into $|V|+3$ sets
indexed by elements of $V'$ such that
for all $i\in V'$, the row of $M$ indexed by $i$ is the bitwise OR of the rows of $R'_i$.
Since $k$ is a lower bound on $\overline{\gamma}(M)$, matrix $M'$ has exactly $k$ rows.
This fact and Corollary~\ref{cor:WABI} imply that each row in $M$ indexed by a vertex
$v\in V$ has $|R_v'| = 3$ and each row indexed by some $r\in \{r_1,r_2,r_3\}$ has $|R_r'| = 1$.

We must have that for all $i\in V'$, the row of $M$ indexed by $i$ is the bitwise {\it sum} of the rows of $R'_i$,
that is, for every column $c\in E\cup\{c_1,c_2,c_3\}$, we have
$M_{i,c} = \sum_{r\in R_i'}M'_{r,c}$.
Indeed, if for some $i\in V'$ and some column $c\in E\cup\{c_1,c_2,c_3\}$, we have that
$\sum_{r\in R_i'}M'_{r,c}>1$, then $i$ is a vertex of $G$. Furthermore, since
$|R_i'| = 3$, there are either two edges of $G$, say $e$ and $f$, incident with $i$
such that for some $r\in R_i'$, we have $M'_{r,e} = M'_{r,f} = 1$, or there are
two distinct elements $e,f\in \{c_1,c_2,c_3\}$ with the same property.
In the former case, considering the rows replacing the rows of $M$ indexed by the endpoints of $e$ and $f$ other than $i$,
 respectively, we find two distinct rows $r'$ and $r''$ of $M$ such that
$M'_{r',e} = M'_{r'',f} = 1$ and $M'_{r'',e} = M'_{r',f} = 0$, which
contradicts the fact that $M'$ is conflict-free.
In the latter case, the argument is similar.

By permuting the rows of $M'$ if necessary, we may assume that each set of the form
$R_v'$ is ordered as $R_v' = \{v^1,v^2,v^3\}$ so that $$M'_{v^i,c_j} = \left\{
                                                                        \begin{array}{ll}
                                                                          1, & \hbox{if $i = j$;} \\
                                                                          0, & \hbox{otherwise.}
                                                                        \end{array}
                                                                      \right.$$
We claim that for every edge $e = uv\in E$ and every $i\in \{1,2,3\}$, we have that $M'_{u^i,e} = M'_{v^i,e}$.
If this was not the case, then we would have $M'_{u^i,e} = M'_{v^j,e} = 1$ for a distinct pair $i,j\in \{1,2,3\}$.
But then the columns of $M'$ indexed by $e$ and $c_i$ would both agree
in value $1$ in row indexed by $u^i$ and disagree (in opposite directions) in rows indexed by
$v^j$ and $v^i$. Thus, they would be in conflict, contrary to the fact that $M'$ is conflict-free.

Since for every edge $e = uv\in E$ and every $i\in \{1,2,3\}$, we have that $M'_{u^i,e} = M'_{v^i,e}$,
we can partition the edges of $E$ into three pairwise disjoint sets $E_1,E_2,E_3$ by placing every edge
$e = uv\in E$ into $E_i$ if and only if $i\in \{1,2,3\}$ is the unique index such that
$M'_{u^i,e} = M'_{v^i,e} = 1$.
We claim that each $E_i$ is a matching in $G$. This will imply that $G$ is $3$-edge-colorable and complete the proof.
If some $E_i$ is not a matching, then there exist two distinct edges, say $e,f\in E_i$ with a common endpoint.
Let $e = xy$ and $f = xz$. The columns of $M'$ indexed by $e$ and $f$ agree in value $1$ at
row indexed by $x^i$, while they disagree (in opposite directions) in rows indexed by
$y^i$ and $z^i$. Thus, they are in conflict, contrary to the conflict-freeness of $M'$.
\end{proof}

Hajirasouliha and Raphael proposed in~\cite{DBLP:conf/wabi/HajirasoulihaR14} an algorithm based on graph coloring for optimally solving the {\sc Minimum Conflict-Free Row Split} problem by constructing a conflict-free row split of $M$ with exactly $\sum_{r}\chi(G_{M,r})$ rows.
Since there are infinitely many cubic graphs that are not $3$-edge-colorable (see, e.g.,~\cite{MR0382052}),
the proof of Theorem~\ref{thm:hard} implies that there exist infinitely many matrices $M$ such that $\overline{\gamma}(M)>\sum_r\chi(G_{M,r})$.
On such instances, the algorithm from~\cite{DBLP:conf/wabi/HajirasoulihaR14} will not produce a valid (that is, conflict-free) solution.

Since the smallest cubic $4$-edge-chromatic graph is the Petersen graph, the smallest matrix $M$ with $\overline{\gamma}(M)>\sum_{r}\chi({G}_{M,r})$ that can be obtained using the construction given in the proof of Theorem~\ref{thm:hard} is of order $13 \times 18$. A smaller matrix $M$ for which the bound from
Corollary~\ref{cor:WABI} is not tight can be obtained by
applying a similar construction starting from the complete graph of order $3$ (which is a $2$-regular $3$-edge-chromatic graph):
$$\small
M=\left(
  \begin{array}{ccc|cc}
      1 & 1 & 0 & 1 & 1\\
      1 & 0 & 1 & 1 & 1\\
      0 & 1 & 1 & 1 & 1\\
      \hline
      0 & 0 & 0 & 1 & 0\\
      0 & 0 & 0 & 0 & 1\\
  \end{array}
\right).
$$
We leave it as an exercise for the reader to verify that $\sum_{r}\chi({G}_{M,r})=8$ and
$\overline{\gamma}(M)\ge 9$ (in fact, $\overline{\gamma}(M)= 9$).
Let us also remark that in~\cite[Section 4.2.1]{Kacar} a binary matrix $M$ is given with $\overline{\gamma}(M)=\sum_{r}\chi({G}_{M,r})$,
on which the algorithm from~\cite{DBLP:conf/wabi/HajirasoulihaR14} fails to produce a conflict-free solution.

We conclude this section with another hardness result.

\begin{theorem}\label{thm:hard2}
The {\sc Minimum Distinct Conflict-Free Row Split} problem is NP-complete.
\end{theorem}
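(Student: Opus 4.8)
The plan is to prove membership in NP and then give a polynomial reduction from $3$-edge-colorability of cubic graphs, following the spirit of the reduction in the proof of Theorem~\ref{thm:hard} but with a modified matrix.

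For membership in NP, I would first observe that the ``trivial'' row split of $M$ that replaces each row by its unit-vector slices is conflict-free (no three rows can witness a conflict, since each row has a single $1$) and uses at most $n$ distinct rows, where $n$ is the number of columns; hence $\overline{\eta}(M)\le n$ and we may assume $k<n$. Now suppose $\overline{\eta}(M)\le k$, fix a conflict-free row split realizing the optimum, and let $S$ be its set of distinct rows, so $|S|\le k$. For each row $r_i$ of $M$ choose $S_i\subseteq S$ with $\bigvee S_i=r_i$ and form $M'$ by taking, for each $i$, the vectors of $S_i$ as the rows of $R'_i$. Then $M'$ is a row split of $M$ with at most $mk<mn$ rows and at most $k$ distinct rows; moreover $M'$ is conflict-free because $S$ is, using the fact that any three rows witnessing a conflict are necessarily pairwise distinct. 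So a polynomial-size certificate exists, and checking the OR condition, conflict-freeness, and the number of distinct rows is polynomial.

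For hardness, given a simple cubic graph $G=(V,E)$ I would build an instance $(M,k)$ retaining the key ingredients of the construction in the proof of Theorem~\ref{thm:hard}: ``marker'' columns $c_1,c_2,c_3$ together with rows $r_1,r_2,r_3$, where $r_i$ is the unit vector supported on $c_i$ (this forces $\mathbf 1_{c_1},\mathbf 1_{c_2},\mathbf 1_{c_3}$ into the set of distinct rows of every conflict-free row split), and vertex rows whose support is the set of incident edges together with the three markers. The point where the construction must depart from that of Theorem~\ref{thm:hard} is that the distinct-row objective makes the trivial unit-vector split cheap (it uses only $|E|+3$ distinct rows, independently of $3$-edge-colorability), so one has to add auxiliary columns or rows whose only role is to push the trivial split above the budget $k$ while still leaving a coloring-based split within it (for instance by attaching a private identifier column to each vertex row, or by replicating the edge columns). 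The target $k$ is then chosen so that, when $G$ has a proper $3$-edge-coloring $E=E_1\cup E_2\cup E_3$, splitting each vertex row according to the colors of its incident edges yields a conflict-free row split in which the split pieces coming from the two endpoints of an edge $e$ coincide — a single vector recording $e$ and the marker of its color — so that the distinct rows are essentially ``one per edge'' plus the three markers plus the per-vertex private parts, giving exactly $k$; conflict-freeness of this split is verified exactly as in the proof of Theorem~\ref{thm:hard}.

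The substantial part, and the step I expect to be the main obstacle, is the converse: recovering a proper $3$-edge-coloring of $G$ from a conflict-free row split of $M$ with at most $k$ distinct rows. Unlike in Theorem~\ref{thm:hard}, there is no row-by-row counting lower bound of the form $\sum_r\chi(G_{M,r})$ to pin the split down, because pieces may be reused across rows, so the argument must be global. I would structure it as follows: (i) the rows $r_i$ force $\mathbf 1_{c_1},\mathbf 1_{c_2},\mathbf 1_{c_3}$ into the set $S$ of distinct rows; (ii) conflict-freeness against these three unit vectors implies that no vector of $S$ carries two markers, and that for each edge $e$ all vectors of $S$ with a $1$ in column $e$ carry the \emph{same} marker $\beta(e)\in\{1,2,3\}$ — it is here that the auxiliary columns/rows must be used, to rule out the degenerate possibility that the generators covering $e$ carry no marker at all; (iii) conflict-freeness between columns $e$ and $c_{\beta(e)}$ together with the budget forces $\beta$ to be consistent at the two endpoints of each edge; and (iv) conflict-freeness between two edge-columns sharing a vertex forces $\beta$ to give them different values, so $\beta$ is a proper $3$-edge-coloring. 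The delicate point throughout is verifying that the modified construction genuinely eliminates every near-optimal row split other than those induced by $3$-edge-colorings, i.e.\ that the chosen $k$ leaves exactly the coloring-based splits within budget; getting the auxiliary gadget and the value of $k$ to match up is the crux of the whole proof.
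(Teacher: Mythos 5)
Your NP-membership argument is fine (and is a legitimate alternative to the paper's route, which instead bounds the number of distinct rows of any conflict-free matrix by $2n$ via the perfect-phylogeny tree); restricting each $R_i'$ to the distinct vectors it uses indeed yields a polynomially sized certificate, since conflict-freeness depends only on the set of distinct rows. You also correctly diagnosed the central difficulty: with the matrix $M$ of Theorem~\ref{thm:hard} alone, the trivial unit-vector split already achieves $|E|+3$ distinct rows regardless of whether $G$ is $3$-edge-colorable, so some gadget is needed to make the distinct-row budget meaningful.

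The genuine gap is that you never commit to that gadget or to the value of $k$, and you explicitly defer the converse direction (``getting the auxiliary gadget and the value of $k$ to match up is the crux''), so the reduction is not actually defined, let alone verified. The paper closes exactly this hole with a very small modification: it forms $\overline{M}$ by appending three columns $d_1,d_2,d_3$ that are zero on all vertex rows and agree with $c_1,c_2,c_3$ on the rows $r_1,r_2,r_3$, and sets $\overline{k}=|E|+3$ (note the private columns are attached to the three auxiliary rows, not to the vertex rows as in your suggestion, whose counting consequences you would still have to work out). The $d_i$'s force each $R'_{r_i}$ to contain a row distinct from every row in a vertex group, giving $3$ distinct rows on top of the at least $|E|$ distinct rows forced by the edge columns (no split row of a vertex may have two $1$s in edge columns, and each edge column must receive a $1$). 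Equality with the budget $|E|+3$ then forces, via a further conflict argument against the rows $r_i$ (which have a $1$ in $c_i$ and a $0$ in every edge column), that no vertex group contributes a marker-only row and that each vertex group has exactly $3$ pairwise distinct rows, each carrying exactly one edge and one marker, with the two endpoints of an edge carrying the same marker; at that point the argument of Theorem~\ref{thm:hard} recovers a $3$-edge-coloring. Your steps (i)--(iv) gesture at this, but without the concrete gadget, the choice $\overline{k}=|E|+3$, and the counting argument that pins the split down (replacing the per-row bound $\sum_r\chi(G_{M,r})$ that is unavailable here), the hardness half of your proof is a plan rather than a proof.
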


\begin{proof}
Membership in NP of the {\sc Minimum Distinct Conflict-Free Row Split} problem
can be argued similarly as for the {\sc Minimum Conflict-Free Row Split} problem.
It suffices to argue that there is a polynomially-sized conflict-free matrix $M'$ such that
$M'$ is a row split of $M$ with at most $k$ distinct rows.
We may assume that for a partition $R_1',\ldots, R_m'$ of rows of $M'$ into $m$ sets satisfying the condition in the definition of a row split,
the rows within each $R_i'$ are pairwise distinct. Recall (e.g.~from~\cite{gusfieldbook}) that a conflict-free matrix
with $d$ distinct rows and $n$ columns corresponds to a perfect phylogenetic rooted tree $T$ such that: $T$ has $d$ leaves (the rows of the matrix), all internal vertices of $T$ are branching, and all edges from a vertex to its children are injectively labeled with a column of $M$, with the exception of at most one edge which is unlabeled. Thus $T$ has at most $2n$ edges, and we infer that $d\le 2n$.
Therefore, the total number of rows of $M'$ does not exceed $2nm$, where $m$ and $n$ are the numbers of rows and columns of $M$, respectively.

The hardness proof is based on a slight modification of the reduction used in the proof of
Theorem~\ref{thm:hard}. (See Fig.~\ref{fig:example2} for an example.)
Given a cubic graph $G=(V,E)$, we map it to $(\overline{M},\overline{k})$ where
\begin{itemize}
\item $\overline{M}$ is the binary matrix obtained from the binary matrix $M$ described in the proof of Theorem~\ref{thm:hard}
by adding to it three columns $d_1,d_2,d_3$, which on the rows indexed by $V$ equal $0$, and on the rows indexed by $r_1,r_2,r_3$, each $d_i$ equals $c_i$, $i \in \{1,2,3\}$.
\item $\overline{k} = |E|+3$.
\end{itemize}

\begin{figure}[!ht]
  \begin{center}
\includegraphics[width=0.66\linewidth]{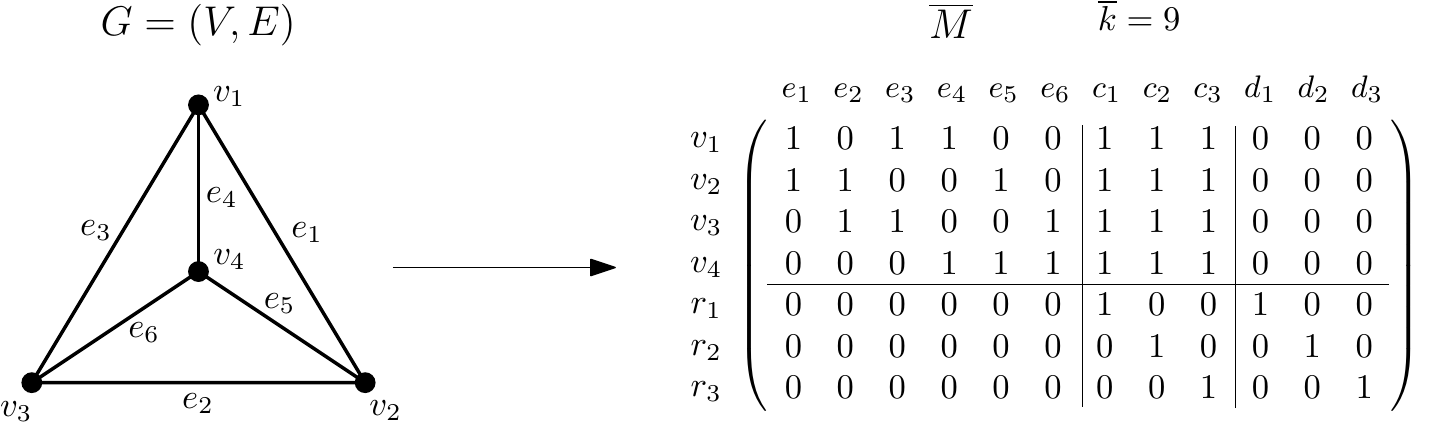}
  \end{center}
\caption{An example construction of $(\overline M,\overline k)$ from $G$.} \label{fig:example2}
\end{figure}

\begin{sloppypar}
We claim that $(\overline{M},\overline{k})$ is an instance of the {\sc Minimum Distinct Conflict-Free Row Split} problem
such that $G$ is $3$-edge-colorable if and only if \hbox{$\overline{\eta}(\overline{M})\le \overline{k}$}.
\end{sloppypar}

Suppose that $G$ is $3$-edge-colorable. Given a partition of $E$ into three matchings $E = E_1 \cup E_2 \cup E_3$, we construct the same matrix $M'$ as described in the proof of Theorem~\ref{thm:hard}, to which we append the three columns indexed by $d_1,d_2,d_3$ which are all $0$s on the rows indexed by vertices, and which are the same as in $\overline{M}$ on the rows $r_1,r_2,r_3$. By the same argument given in the proof of Theorem~\ref{thm:hard}, $M'$ is conflict-free. Each row $r_i$, $i\in \{1,2,3\}$, is distinct from all other rows of $M'$. Let $v^i$, $i \in \{1,2,3\}$, be a row corresponding to a vertex $v$ and suppose $M'_{v^i,e} = 1$, where $e=uv$ is one of the three edges incident to $v$, and $e \in E_i$. By construction, the only other row having a $1$ in column $e$ is $u^i$. Thus, row $v^i$ is different from all other rows, except $u^i$. In fact, we can see that row $v^i$ is identical to row $u^i$, since they have no other entry $1$ on the columns indexed by edges. Additionally, they both have $1$ in column $c_i$, since $e \in E_i$, and $0$ in the other five columns in $\{c_1,c_2,c_3,d_1,d_2,d_3\} \setminus \{c_i\}$. Hence, the number of distinct rows of $M'$ is at most
$3|V|/2 + 3 = |E| + 3 = \overline{k}$, since $G$ is cubic, and thus $\overline{\eta}(\overline M)\le \overline{k}$.

For the converse direction, suppose that $M'$ is a conflict-free row split of $\overline{M}$ with at most $\overline{k} = |E| + 3$ distinct rows.
Let $V' = V\cup \{r_1,r_2,r_3\}$ and consider a partition $\{R'_i\mid i\in V'\}$ of the set of rows of $M'$ into $|V|+3$ sets
indexed by elements of $V'$ such that for all $i\in V'$, the row of $\overline M$
indexed by $i$ is the bitwise OR of the rows of $R'_i$.
We will prove that (1) the number of pairwise distinct rows in $R_v'$ is $3$ for all $v \in V$, and that (2) the number of pairwise distinct rows in $R_{r}$ is $1$ for all $r \in \{r_1,r_2,r_3\}$.
Applying the same approach as in the proof of Theorem~\ref{thm:hard} will then imply that $G$ is $3$-edge-colorable.

As argued in the proof of Theorem~\ref{thm:hard}, no row in $R'_v$ has two $1$s on two columns indexed by two edges, say $e$ and $f$, because each of $e$ and $f$ has an endpoint which is not an endpoint of the other edge (and thus
a row with two $1$s on two columns indexed by two edges would imply a conflict in $M'$).
Moreover, no row in $R'_v$ has two $1$s on two columns indexed by $c_1,c_2,c_3$.

Let us associate with each row of $M'$ belonging to some $R_i'$ with $i\in V$ the edge column where it has a $1$ (if there is any). Since each edge column contains a $1$ and no row has two $1$s on the columns indexed by edges, the number of pairwise distinct rows of $M'$ indexed by a vertex is at least $|E|$. Since in each $R_{r_i}'$, $i \in \{1,2,3\}$, we must have at least one row distinct from all other rows of $M'$ (because of the $1$s in columns $d_1,d_2,d_3$), and $M'$ has at most $|E| + 3$ pairwise distinct rows, the number of distinct rows of $M'$ is exactly $|E| + 3$. This directly implies (2), more precisely, that each $R_{r}$ consists only of a row identical to the corresponding row of $\overline{M}$.

In order to prove property (1), suppose now that there is a row of $\overline M$ indexed by a vertex $v$ such that $R_v'$ contains at least $4$ pairwise distinct rows. Observe first that there is no row in $R_v'$ having a $1$ only in one column among $\{c_1,c_2,c_3\}$ (and only 0s in the columns indexed by edges).
Indeed, besides being distinct from the row in each $R_{r}'$, $r \in \{r_1,r_2,r_3\}$, it would also be distinct from each of the set of at least $|E|$ rows of $M'$ having a $1$ on a column indexed by an edge. Thus this would contradict the fact that $M'$ has at most $|E| + 3$ pairwise distinct rows. This implies that there are two distinct rows $v'$ and $v''$ in $R_v'$ such that $v'$ and $v''$ both contain a $1$ on the same column indexed by an edge, say $e$, but on a column among $\{c_1,c_2,c_3\}$, say $c_i$, $v'$ contains $1$ and $v''$ contains $0$. This shows that there is a conflict in $M'$, since $M'_{r_i,e} = 0$ and $M'_{r_i,c_i} = 1$, a contradiction.
\end{proof}

\section{A polynomially solvable case}\label{sec:no-containments-in-conflicting}

In this section we consider the binary matrices in which
no column is contained in both columns of a pair of conflicting columns, and derive a polynomial time algorithm  for the {\sc Minimum Conflict-Free Row Split} problem on such matrices. The main idea behind the algorithm
is the fact that on such matrices the lower bound from
Corollary~\ref{cor:WABI}
is achieved,
and the bound can be expressed in terms of
parameters of a set of derived digraphs, the so-called directed containment graph (see Definition~\ref{def:dcg} below).

Let $M$ be a binary matrix such that no column of $M$ is contained in two or more conflicting columns.
If there are duplicated columns in $M$, then we form a new matrix where we take just one copy of the columns that are duplicated.
Since an optimal solution of the reduced instance can be mapped to an optimal solution of the original instance
(by duplicating the columns corresponding to the copies of the duplicated columns in $M$ kept by the reduction),
we may assume that there are no duplicated columns in $M$.

\begin{definition}\label{def:dcg}
Given a binary matrix $M$ with distinct columns $c_1,\dots,c_n$ and a row $r$ of $M$, the \em directed containment graph of $(M,r)$ is the graph
$\overrightarrow{H}_{M,r}$ whose vertex set is the set of columns of $M$ having a $1$ in row $r$, in which there is a directed edge from $c_i$ to $c_j$ if and only if $i\neq j$ and $c_i$ is contained in $c_j$.
\end{definition}

\noindent We will use the notation $c_i \sqsubset_r c_j$ as a shorthand for the fact that $(c_i,c_j)$ is a directed edge of $\overrightarrow{H}_{M,r}$. We say that $c_i$ is a {\em source} of $\overrightarrow{H}_{M,r}$ if $c_i \in V(\overrightarrow{H}_{M,r})$ and there is no $c_j$ with $c_j \sqsubset_r c_i$. Let $\sigma(M,r)$ denote the number of sources in $\overrightarrow{H}_{M,r}$.

\begin{lemma}\label{lem:delta-chi}
If there are no duplicated columns in $M$, then
$\sigma(M,r)\leq \chi(G_{M,r})$ holds for any row $r$ of $M$.
\end{lemma}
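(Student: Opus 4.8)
The plan is to exhibit the sources of $\overrightarrow{H}_{M,r}$ as a clique in $G_{M,r}$; since $\chi(G_{M,r}) \ge \omega(G_{M,r})$ for any graph, this immediately gives $\sigma(M,r) \le \omega(G_{M,r}) \le \chi(G_{M,r})$. So the whole task reduces to showing: \emph{any two distinct sources of $\overrightarrow{H}_{M,r}$ correspond to a pair of columns that are in conflict in $M$.} Let $c_i$ and $c_j$ be two distinct sources. As in the proof of Theorem~\ref{thm:GMr}, recall that for the columns having a $1$ in row $r$, any two of them are either in conflict or joined by an edge in the undirected containment graph (one is contained in the other). So it suffices to rule out the containment possibility. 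If $c_i$ were contained in $c_j$, then $(c_i,c_j)$ would be a directed edge of $\overrightarrow{H}_{M,r}$, contradicting that $c_j$ is a source; symmetrically $c_j$ contained in $c_i$ contradicts that $c_i$ is a source. Hence neither containment holds, so $c_i$ and $c_j$ are in conflict, i.e.\ adjacent in $G_{M,r}$.

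First I would restrict attention to the submatrix $N$ consisting of the columns of $M$ that have a $1$ in row $r$, noting (again following the proof of Theorem~\ref{thm:GMr}) that $G_{M,r} \cong G_{N,r}$ and that any two columns of $N$ are comparable (one contained in the other) exactly when their vertices are non-adjacent in $G_{N,r}$ — equivalently, adjacent in the undirected containment graph $H_N$. Then I would take the set $S$ of sources of $\overrightarrow{H}_{M,r}$ and argue, by the paragraph above, that $S$ induces a clique in $G_{M,r}$. The hypothesis that there are no duplicated columns in $M$ is what guarantees that ``contained in both directions'' does not occur, so that the directed containment relation is a genuine strict partial order and ``source'' is well-defined; it also ensures that distinct vertices of $\overrightarrow{H}_{M,r}$ really are distinct columns, so that a pair of distinct sources cannot be equal as binary vectors (which would make them trivially non-conflicting). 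Finally, $|S| = \sigma(M,r)$ and a clique of size $\sigma(M,r)$ forces $\chi(G_{M,r}) \ge \sigma(M,r)$.

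I do not expect a serious obstacle here; the only point requiring a little care is the clean statement that, for columns all having a $1$ in row $r$, ``non-adjacent in $G_{M,r}$'' is the same as ``one contained in the other.'' This is essentially the observation $H_N \cong \overline{G_{N,r}}$ from the proof of Theorem~\ref{thm:GMr}, so I would simply invoke it rather than re-derive it. The no-duplicated-columns assumption must be used (and mentioned explicitly) precisely at the step where I claim that two distinct sources are not identical vectors and that the containment relation orients each comparable pair in exactly one direction, so that being a source is incompatible with being on the head of a containment edge.
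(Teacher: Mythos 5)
Your proposal is correct and follows essentially the same route as the paper: both show that two distinct sources cannot be in containment (else one would have an incoming edge) and cannot be disjoint (both have a $1$ in row $r$), hence the sources form a clique in $G_{M,r}$, and conclude via $\sigma(M,r)\le\omega(G_{M,r})\le\chi(G_{M,r})$. The paper phrases this by observing that the underlying undirected graph of $\overrightarrow{H}_{M,r}$ equals $\overline{G_{M,r}}$, so sources form an independent set there, which is exactly your argument in slightly different words.
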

\begin{proof}
Two vertices in the complement of $G_{M,r}$ are adjacent if and only if the corresponding columns of $M$ are either disjoint or
one contains the other one. However, since the vertices of both $\overrightarrow{H}_{M,r}$ and $G_{M,r}$
correspond to columns in which $M$ has value $1$ in row $r$, no two such columns can be disjoint. Consequently,
the underlying undirected graph of $\overrightarrow{H}_{M,r}$ is
equal to the complement of $G_{M,r}$. The set of all sources of $\overrightarrow{H}_{M,r}$ forms an independent set in its underlying undirected graph. This set corresponds to a clique in $G_{M,r}$.
Therefore  $\sigma(M,r)\leq \omega(G_{M,r})\leq \chi(G_{M,r})$ (where $\omega(G_{M,r})$ denotes the maximum size of a clique in $G_{M,r}$).
\end{proof}

Our algorithm is the following one (see also Fig.~\ref{fig:algorithm} for an example).

\begin{framed}
\noindent
{\bf Input}: An $m\times n$ binary matrix $M$ with columns $c_1,c_2,\ldots, c_n$, without duplicated columns, and such that no column of $M$ is contained in both columns of a pair of conflicting columns.

\noindent
{\bf Output}: A conflict-free row split $M'$ of $M$ with $\overline{\gamma}(M)$ rows.

\noindent
{\bf Algorithm:}
\begin{enumerate}
  \item Define a new matrix $M'$ with columns $c'_1,c'_2,\ldots, c'_n$.
  \item For each row $r$ of $M$, add the rows $r'_1,\ldots,r'_{\sigma(M,r)}$ to $M'$, defined as:

~~~~~Let $c_{r,1},\ldots,c_{r,\sigma(M,r)}$ be the sources of $\overrightarrow{H}_{M,r}$.

~~~~~$M'_{r'_i,c'_j} = \left\{
                                                                        \begin{array}{ll}
                                                                          1, & \hbox{if $c_{r,i} = c_j$ or $c_{r,i} \sqsubset_r c_j$;}  \\
                                                                          0, & \hbox{otherwise.}
                                                                        \end{array}
                                                                      \right.$
  \item Return $M'$.
\end{enumerate}
\end{framed}

\begin{figure}[!ht]
\centering
\includegraphics[width=0.7\linewidth]{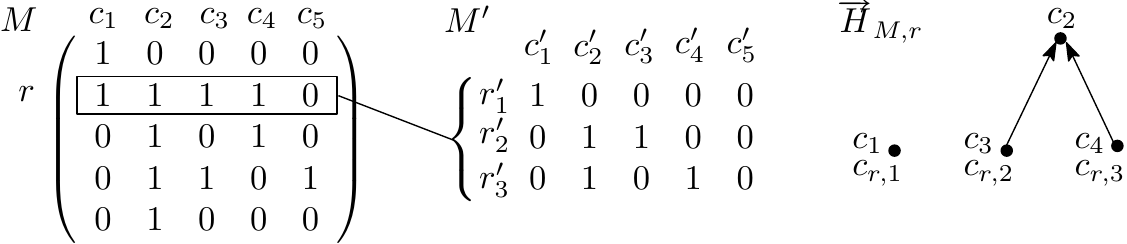}
\caption{An example of a matrix $M$ in which
 no column is contained in both columns of a pair of conflicting columns ($c_1,c_2$ and $c_3,c_4$ are conflicting). The rows $r'_1,r'_2,r'_3$ constructed by the algorithm corresponding to row $r$ of $M$ are shown in the center. On the right, the directed containment graph of $(M,r)$.\label{fig:algorithm}}
\end{figure}

\begin{theorem}
\label{thm:poly}
For any $m\times n$ binary matrix $M$ without duplicated columns such that no column of $M$ is contained in
both columns of a pair of conflicting columns, it holds that $\overline{\gamma}(M) = \sum_{r}\chi(G_{M,r}) = \sum_r\sigma(M,r)$.
{Moreover, a conflict-free row split $M'$ of $M$ with $\overline{\gamma}(M)$ rows can be constructed in time $O(mn^2)$.}
\end{theorem}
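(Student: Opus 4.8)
The plan is to establish the chain of (in)equalities
$\overline{\gamma}(M) \ge \sum_r \chi(G_{M,r}) \ge \sum_r \sigma(M,r) \ge (\text{number of rows of } M')$
and then show that $M'$ as constructed by the algorithm is a conflict-free row split of $M$; since the first inequality is Corollary~\ref{cor:WABI} and $M'$ is a \emph{valid} conflict-free row split, all inequalities must be equalities, yielding $\overline{\gamma}(M) = \sum_r \chi(G_{M,r}) = \sum_r \sigma(M,r)$ and the optimality of $M'$. The second inequality is Lemma~\ref{lem:delta-chi}, and the third holds by construction since the algorithm adds exactly $\sigma(M,r)$ rows per row $r$ of $M$. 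So the heart of the argument is verifying two things about $M'$: that it is a row split of $M$ (the bitwise OR of the added rows $r'_1,\dots,r'_{\sigma(M,r)}$ equals $r$), and that $M'$ is conflict-free. The time complexity then follows from a routine accounting of the construction.

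First I would verify that $M'$ is a row split of $M$. Fix a row $r$ and a column $c_j$ with $M_{r,c_j}=1$; I must show some added row $r'_i$ has a $1$ in $c'_j$, and that no added row has a $1$ in $c'_j$ when $M_{r,c_j}=0$. The latter is immediate: every source $c_{r,i}$ lies in $V(\overrightarrow{H}_{M,r})$, so $c_{r,i}$ has a $1$ in row $r$, and if $c_{r,i}\sqsubset_r c_j$ then $c_{r,i}$ is contained in $c_j$, so $c_j$ also has a $1$ in row $r$. For the former, note that $c_j\in V(\overrightarrow{H}_{M,r})$; following directed edges backwards in the acyclic digraph $\overrightarrow{H}_{M,r}$ (acyclicity because $\sqsubset_r$ is a strict partial order on distinct columns — here I use that there are no duplicated columns) one reaches a source $c_{r,i}$ with $c_{r,i}\sqsubset_r c_j$ or $c_{r,i}=c_j$, and then $M'_{r'_i,c'_j}=1$ by definition. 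Hence the bitwise OR of the added rows equals $r$.

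The main obstacle is proving that $M'$ is conflict-free, and here the hypothesis that no column of $M$ is contained in both columns of a pair of conflicting columns is essential. Suppose for contradiction that columns $c'_a$ and $c'_b$ of $M'$ are in conflict, witnessed by three rows of $M'$; these rows are added rows $p, q, s$ coming (possibly) from different rows of $M$, with $M'_{p,c'_a}=M'_{p,c'_b}=1$, $M'_{q,c'_a}=M'_{s,c'_b}=1$, and $M'_{q,c'_b}=M'_{s,c'_a}=0$. The row $p$ being a split of some row $r$ of $M$ and having $1$ in both $c'_a,c'_b$ forces (by the definition of $M'_{r'_i,\cdot}$, whose support is exactly the up-set of a single source $c_{r,i}$ in $\overrightarrow{H}_{M,r}$) that $c_{r,i}$ is contained in both $c_a$ and $c_b$; in particular $c_a$ and $c_b$ are comparable in $\overrightarrow{H}_{M',\cdot}$-style containment, say $c_a$ contained in $c_b$ (or conversely). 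But containment of columns in $M$ is inherited in the sense that it constrains the supports of the added rows: if $c_a$ is contained in $c_b$ in $M$, I must argue that in every added row $M'_{\cdot, c'_a}=1$ implies $M'_{\cdot, c'_b}=1$, which contradicts $M'_{s,c'_a}=1, M'_{s,c'_b}=0$. This step is where I use that two columns in conflict in $M$ cannot share a common "lower bound" column; more precisely, a conflict between $c'_a$ and $c'_b$ in $M'$ would, after tracing supports back to sources, manifest a conflict between $c_a$ and $c_b$ in $M$ together with a column contained in both, contradicting the hypothesis. I expect the bookkeeping here — carefully relating the three witnessing rows of $M'$ back to sources and then to the structure of $\overrightarrow{H}_{M,r}$ for the relevant rows $r$ — to be the most delicate part, and I would organize it by a short case analysis on whether $c_a,c_b$ are comparable in $M$.

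Finally, for the running time: computing all pairwise containments among the $n$ columns takes $O(mn^2)$; for each row $r$, building $\overrightarrow{H}_{M,r}$ and identifying its $\sigma(M,r)$ sources is $O(n^2)$, and filling in the added rows is $O(n\cdot \sigma(M,r)) = O(n^2)$; summing over the $m$ rows gives $O(mn^2)$ overall, matching the claimed bound.
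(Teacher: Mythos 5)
Your overall architecture is exactly the paper's: the chain $\sum_r\sigma(M,r)\le\sum_r\chi(G_{M,r})\le\overline{\gamma}(M)$ from Lemma~\ref{lem:delta-chi} and Corollary~\ref{cor:WABI}, closed by showing that the algorithm's output $M'$ is a conflict-free row split with $\sum_r\sigma(M,r)$ rows; your verification that $M'$ is a row split and your $O(mn^2)$ accounting are fine (the paper computes one global transitive orientation of $H_M$ and restricts it per row, but your per-row version gives the same bound).

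The soft spot is the conflict-freeness argument, which you only sketch and where you attach the hypothesis to the wrong step. Two separate facts are needed. First, from the split row $p$ with $1$ in both $c'_a$ and $c'_b$: either the source of $p$ is one of $c_a,c_b$ (then one is contained in the other immediately), or the source is a third column contained in both; in the latter case the hypothesis forbids $c_a,c_b$ from being in conflict in $M$, and since they also share a $1$ in the row $r$ of $M$ underlying $p$ (so they are not disjoint), one must contain the other. Note that ``source contained in both'' alone does not yield comparability --- you need the no-conflict conclusion \emph{and} the shared $1$, and you assert ``in particular comparable'' without this; your later ``more precisely'' sentence recovers the right idea but only implicitly. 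Second, if $c_a$ is contained in $c_b$ in $M$, then every split row with a $1$ in $c'_a$ has a $1$ in $c'_b$: its source is contained in (or equal to) $c_a$, hence contained in $c_b$, and $c_b$ is a vertex of the relevant $\overrightarrow{H}_{M,w}$ because $M_{w,c_a}=1$ forces $M_{w,c_b}=1$; so $c_b$ lies in the up-set of the source. This is the step that kills the witnessing row $q$ (not $s$ as you wrote --- the labels got swapped, though the WLOG absorbs it), and it uses only transitivity of containment and the absence of duplicated columns, \emph{not} the hypothesis; you declare it the most delicate part and say the hypothesis is used there, which is where your sketch would go astray if executed as described. With the two facts separated as above, the contradiction closes in a few lines, exactly as in the paper's proof.
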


\begin{proof}
We claim that the matrix $M'$ produced by the above algorithm is a conflict-free row split of $M$ with number of rows equal to $\overline{\gamma}(M)$.

It is clear that $M'$ is a row split of $M$.
Let us prove that $M'$ is conflict-free. Suppose the contrary, that is, let $c'_i$ and $c'_j$ be two columns of $M'$ which are in conflict. Then, there exists a row $r'_k$ of $M'$ (obtained by splitting a row $r$ of $M$) which has $1$ in columns $c'_i$ and $c'_j$.

We will first show that $c_i$ is contained in $c_j$ or vice versa.
If $c_{r,k}=c_i$ (resp.~$c_{r,k}=c_j$) then $c_i \sqsubset_r c_j$ (resp.~$c_j \sqsubset_r c_i$)
and therefore column $c_i$ is contained in column $c_j$ (resp.~$c_j$ is contained in $c_i$).
Suppose now that $c_{r,k}\not \in \{c_i,c_j\}$.
Since $r'_k$ has 1 in columns $c'_i$ and $c'_j$ it follows that $c_{r,k} \sqsubset_r c_i$ and $c_{r,k} \sqsubset_r c_j$. This implies that column $c_{r,k}$ is contained in both column $c_i$ and column $c_j$. By the assumption on $M$, $c_i$ and $c_j$ cannot be in conflict, hence, one of them is contained in the other one.

Thus, we may assume without loss of generality that $c_i$ is contained in $c_j$.
Since $c'_i$ and $c'_j$ are in conflict it follows that there exists a row $w'_\ell$ of $M'$ which has $1$ in column $c'_i$ and $0$ in column $c'_j$.
This implies that the corresponding row $w$ of $M$ has $1$ in column $c_i$, and consequently also in $c_j$, since $c_i$ is contained in $c_j$.
Therefore, both $c_i$ and $c_j$ are vertices of $\overrightarrow{H}_{M,w}$.
If $c_i=c_{w,\ell}$, then $w'_\ell$ has value $1$ in column $c'_j$ (since $c_i$ is contained in $c_j$), which contradicts the choice of $w'_\ell$.
Thus, $c_i\ne c_{w,\ell}$ and $c_{w,\ell} \sqsubset_w c_i$.
However, since $c_i$ is contained in $c_j$ and $\overrightarrow{H}_{M,w}$ is transitive, it follows that $c_{w,\ell} \sqsubset_w c_j$. This implies that row $w'_\ell$ has value $1$ in column $c'_j$, which again contradicts the choice of $w'_\ell$. This finally shows that $M'$ is conflict-free.

Since the number of rows in $M'$ is $\sum_r \sigma(M,r)$ and $M'$ is conflict free, we have
$\overline{\gamma}(M)\le \sum_r \sigma(M,r)$.
By Corollary~\ref{cor:WABI} and Lemma~\ref{lem:delta-chi} we have
$\sum_r \sigma(M,r)\leq \sum_r\chi(G_{M,r}) \leq \overline{\gamma}(M)$.
This implies equality.

{It remains to justify the time complexity.
First, we compute, in time $O(mn^2)$, the transitive orientation
$\overrightarrow{H}_{M}$ of the undirected containment graph $H_M$
as specified by Observation~\ref{obs:transitive} (that is, $(c_i, c_j)$ is an arc of
$\overrightarrow{H}_{M}$ if and only if $c_i$ is properly contained in $c_j$).
Since for each row $r$ of $M$, the graph
$\overrightarrow{H}_{M,r}$ is an induced subdigraph of $\overrightarrow{H}_{M}$,
the $\sigma(M,r)$ sources of $\overrightarrow{H}_{M,r}$ can be computed from $\overrightarrow{H}_{M}$ in
the straightforward way in time $O(n^2)$.
The corresponding $\sigma(M,r)$ new rows of $M'$ can be computed in
time $O(\sigma(M,r)n)$, which results in total time complexity of
$O(mn^2)+O(\sum_{r}\sigma(M,r)n)= O(mn^2)$, as claimed. }
\end{proof}

Note that the correctness of the algorithm crucially relies on the assumption that
no column of the input matrix is contained in both columns of a pair of conflicting columns.
For example, the algorithm fails to resolve the conflict in the $3\times 3$ input matrix $M = \left(
                                                                                   \begin{array}{ccc}
                                                                                     1 & 1 & 1 \\
                                                                                     0 & 1 & 0 \\
                                                                                     0 & 0 & 1 \\
                                                                                   \end{array}
                                                                                 \right)$, which violates the assumption (column $c_1$ is contained in
                                                                                 both columns $c_2$ and $c_3$, which are in conflict).
Given the above matrix $M$, the output matrix $M'$ computed by the algorithm is in fact equal to $M$.

It is also worth mentioning that if the input matrix satisfies the stronger property that no column is contained in another one, Theorem~\ref{thm:poly} implies that the na\"{i}ve solution obtained by splitting each row $r$ into as many $1$s as it contains always produces an optimal solution. This is true since all vertices of $\overrightarrow{H}_{M,r}$ are sources. We thus obtain:

\begin{corollary}\label{cor:no-containments}
For any binary $m \times n$ matrix $M$ such that no column of $M$ is contained in another one, it holds that $\overline{\gamma}(M) = m'$, where $m'$ equals the number of 1s in $M$. Moreover, a conflict-free row split $M'$ of $M$ of size $m' \times n$ can be constructed in time $O(m'n)$.
\end{corollary}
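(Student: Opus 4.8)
The plan is to derive the corollary directly from Theorem~\ref{thm:poly} by checking that the hypothesis ``no column of $M$ is contained in both columns of a pair of conflicting columns'' is vacuously satisfied when no column of $M$ is contained in another one at all, and then unwinding the quantities $\sigma(M,r)$ in this special case. First I would note that if no column of $M$ is contained in another, then in particular no column is contained in two conflicting columns, so $M$ meets the hypothesis of Theorem~\ref{thm:poly}; moreover, $M$ has no duplicated columns (two equal columns would each be contained in the other), so the theorem applies with no preprocessing. Hence $\overline{\gamma}(M) = \sum_r \sigma(M,r)$.

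Next I would compute $\sigma(M,r)$ for each row $r$. By Definition~\ref{def:dcg}, the vertex set of $\overrightarrow{H}_{M,r}$ is exactly the set of columns having a $1$ in row $r$, and there is an arc $c_i \sqsubset_r c_j$ only if $c_i$ is contained in $c_j$. Since by assumption no containment relation holds between any two columns of $M$, the digraph $\overrightarrow{H}_{M,r}$ has no arcs at all, so every one of its vertices is a source. Therefore $\sigma(M,r)$ equals the number of columns with a $1$ in row $r$, i.e.\ the number of $1$s in row $r$. Summing over all rows, $\sum_r \sigma(M,r)$ equals the total number of $1$s in $M$, which I call $m'$. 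Combining with the previous paragraph gives $\overline{\gamma}(M) = m'$, and the construction in Theorem~\ref{thm:poly} produces a conflict-free row split $M'$ with exactly $m'$ rows; since $M'$ has $n$ columns, it has size $m' \times n$.

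For the time complexity I would observe that in this degenerate case the algorithm of Section~\ref{sec:no-containments-in-conflicting} specializes to the na\"{i}ve row split: for each row $r$ with $1$s in columns $c_{r,1},\dots,c_{r,\sigma(M,r)}$, we add exactly $\sigma(M,r)$ rows, the $i$-th of which has a single $1$, in column $c'_{r,i}$ (there are no proper containments to propagate, so the ``or $c_{r,i}\sqsubset_r c_j$'' clause never fires). Reading off the $1$s of $M$ and emitting the corresponding singleton rows can be done in one pass over $M$ in time $O(m'n)$ (or even $O(mn)$ to scan the matrix plus $O(m'n)$ to write the output), avoiding the $O(mn^2)$ cost of computing the full containment digraph, which is unnecessary here. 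I do not expect any real obstacle: the only point requiring a moment's care is the elementary verification that ``no column contained in another'' forces $\overrightarrow{H}_{M,r}$ to be edgeless and hence all its vertices to be sources, from which both the formula and the optimality of the na\"{i}ve split follow immediately from Theorem~\ref{thm:poly}.
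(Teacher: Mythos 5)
Your proposal is correct and follows essentially the same route as the paper: the paper also derives Corollary~\ref{cor:no-containments} from Theorem~\ref{thm:poly} by observing that when no column is contained in another, every vertex of $\overrightarrow{H}_{M,r}$ is a source, so $\sigma(M,r)$ is the number of $1$s in row $r$ and the na\"{i}ve split is optimal. Your additional checks (the hypothesis of Theorem~\ref{thm:poly} holds vacuously, no duplicated columns, and the $O(m'n)$ specialization of the algorithm) are exactly the details the paper leaves implicit.
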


{
\section{A heuristic algorithm based on coloring co-comparability graphs}\label{sec:heuristic}

As pointed out in Section~\ref{sec:complexity-results}, the graph theoretic algorithm from~\cite[Section 4]{DBLP:conf/wabi/HajirasoulihaR14}
fails to always produce a conflict-free row split of the input matrix. In this section, we propose a polynomial time heuristic algorithm for the
{\sc Minimum Conflict-Free Row Split} problem, that is, an algorithm that always produces a conflict-free row split of the input matrix.
This algorithm is also based on graph colorings.

Before presenting the algorithm, we describe the intuition behind it.
The lower bound on $\overline{\gamma}(M)$ given by Corollary~\ref{cor:WABI}
follows from the fact that in every conflict-free row split $M'$ of the input matrix $M$,
the rows replacing row $r$ in the split can be used to
produce a valid vertex coloring of $G_{M,r}$, the conflict graph of $(M,r)$.
The difficulty in reversing this argument in order to obtain a row split of $M$ having a number of rows close
to the lower bound $\sum_r\chi(G_{M,r})$ is due to the fact that we cannot independently combine the
splits of rows $r$ of $M$ according to optimal colorings of their conflict graphs, as new conflicts may arise.

We can guarantee that the corresponding row splits will be pairwise compatible (in the sense that no new conflicts
will be generated) as follows. We color $G_M$, the conflict graph of the whole input matrix (which we will define in a moment), and
split each row $r$ according to the coloring of its conflict graph $G_{M,r}$ given by the restriction of the coloring of $G_M$ to the vertex set of
$G_{M,r}$. The graph $G_M$, {\it the conflict graph of $M$}, is defined as follows: with each column of $M$,
we associate a vertex in $G_{M}$. Two vertices in $G_{M}$ are connected by an edge if and only if the corresponding
columns in $M$ are in conflict. Note that each conflict graph $G_{M,r}$ of an individual row is an induced subgraph of $G_M$, hence the restriction of
any proper coloring $c$ of $G_M$ to $V(G_{M,r})$ is a proper coloring of $G_{M,r}$.

The above approach will result in a row split having number of rows given by the value of $\sum_{r}|c(V(G_{M,r}))|$,
where $|c(V(G_{M,r}))|$ denotes the number of colors used by $c$ on $V(G_{M,r})$. As a first heuristic attempt to minimize this quantity,
Ka\v car proposed in~\cite[Section 4.2.2]{Kacar} to choose a coloring $c$ of $G_M$ with $\chi(G_M)$ colors. However, this is computationally intractable. While row-conflict graphs are characterized (in Theorem~\ref{thm:GMr}) as exactly the co-comparability graphs (that is, as complements of transitively orientable graphs)---for which the coloring problem is polynomially solvable~\cite{MR2063679}---, conflict graphs of binary matrices do not enjoy such nice features. Indeed, if $G$ is any graph of minimum degree at least $2$, then $G\cong G_M$, where $M\in \{0,1\}^{E(G)\times V(G)}$ is the edge-vertex incidence matrix of $G$ (defined by $M_{e,v} = 1$ if and only if vertex $v$ is an endpoint of edge $e$).

This can be amended as follows. We can ``restore'' the structure of co-comparability graphs by observing that $G_M$ is a spanning subgraph of $\overline{H_M}$, the complement of the undirected containment graph $H_M$ (cf.~Definition~\ref{def:H_M}), and working with $\overline{H_M}$ instead. Recall that $H_M$ is the undirected graph whose vertices correspond to the columns of $M$ and in which two vertices $i$ and $j$, $i\neq j$, are adjacent if and only if one the
corresponding columns is contained in the other one.  To show that $G_M$ is a spanning subgraph of $\overline{H_M}$, note first that we may assume that $V(G_M) = V(\overline{H_M})$ (as both vertex sets are in bijective correspondence with the set of columns of $M$). Moreover, if two vertices $i$ and $j$ of $G_M$ are adjacent, then the corresponding columns are in conflict, which implies that neither of them is contained in the other one; consequently, they are adjacent in $\overline{H_M}$.

Since $G_M$ is a spanning subgraph of $\overline{H_M}$, any proper coloring of $\overline{H_M}$ is also a proper coloring of $G_M$.
Moreover, even though the graph $\overline{H_M}$ might have more edges than $G_M$, these additional edges (if any)
will not be contained in any of the graphs $G_{M,r}$. Indeed, for every row $r$, its conflict graph $G_{M,r}$ coincides both with the subgraph of $G_M$ induced by $U := V(G_{M,r})$ as well as with the subgraph of $\overline{H_M}$ induced by $U$. This is because for any two vertices $i$ and $j$ in $U$ that are adjacent in $\overline{H_M}$, the corresponding columns cannot be disjoint, therefore, since $i$ and $j$ are not adjacent in $H_M$, the corresponding columns must be in conflict.

In view of the above observations, we propose choosing an optimal coloring $c$ of
the co-comparability graph $\overline{H_M}$ as a heuristic approach to minimizing the
value of $\sum_{r}|c(V(G_{M,r}))|$ for a coloring $c$ of $G_M$. A
row split of $M$ is then defined according to the coloring $c$.

This leads to the following algorithm (see Fig.~\ref{fig:algorithm-heuristic} for an example):

\medskip
\begin{framed}
\noindent
{\bf Input}: An $m\times n$ binary matrix $M$ with columns labeled with $1,\ldots, n$.

\noindent
{\bf Output}: A conflict-free row split $M'$ of $M$.

\noindent
{\bf Algorithm:}
\begin{enumerate}
  \item Define a new matrix $M'$ with $n$ columns labeled with $1,\ldots, n$.
  \item Compute $\overline{H_M}$, the complement of the undirected containment graph $H_M$.
  \item Compute an optimal coloring $c$ of $\overline{H_M}$.
  \item For each row $r$ of $M$:

  ~~~~~Let $c(V(G_{M,r})) = \{s_1^r,\ldots, s_t^r\}$.

 ~~~~~Add the rows $r'_1,\ldots,r'_{t}$ to $M'$, defined as:
	\begin{enumerate}
  	\item[] $M'_{r'_i,j} = \left\{
                                                                        \begin{array}{ll}
                                                                          1, & \hbox{if $M_{r,j} = 1$ and $c(j) = s_i^r$;}  \\
                                                                          0, & \hbox{otherwise.}
                                                                        \end{array}
                                                                      \right.$
  	\end{enumerate}
\item Return $M'$.
\end{enumerate}
\end{framed}

\begin{figure}[!ht]
\centering
\includegraphics[width=0.7\linewidth]{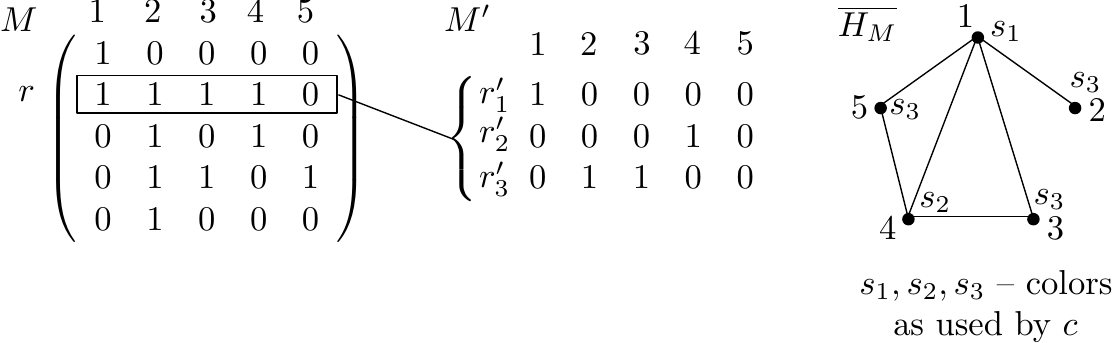}
\caption{An example of a binary matrix $M$, the complement of its undirected containment graph together with an optimal coloring $c$, and a split
of a row of $M$ according to the above algorithm.\label{fig:algorithm-heuristic}}
\end{figure}

\vbox{\begin{theorem}\label{thm:poly-heuristic}
For any $m\times n$ binary matrix $M$, the above algorithm can be implemented to
run in time $O(n^2(n^{1/2}+m))$. The matrix $M'$ output by the algorithm is a conflict-free row split of $M$.
\end{theorem}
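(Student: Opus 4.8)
The plan is to verify the two claims in the theorem separately: correctness of the output, and the running time bound. For correctness, I would first argue that $M'$ is a row split of $M$. This is essentially by construction: for each row $r$ of $M$, the split rows $r'_1,\dots,r'_t$ each pick out the $1$-entries of $r$ lying in one color class of $c$ restricted to $V(G_{M,r})$, and since $\{s_1^r,\dots,s_t^r\}$ are exactly the colors $c$ uses on $V(G_{M,r})$, every column $j$ with $M_{r,j}=1$ contributes a $1$ to exactly one $r'_i$ (the one with $s_i^r = c(j)$) and columns with $M_{r,j}=0$ contribute only $0$s. Hence $r$ is the bitwise OR (indeed the bitwise sum) of its split rows. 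Taking $R'_i$ to be the set of split rows of row $r_i$ gives the required partition.

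The substantive part is conflict-freeness of $M'$. Here I would lean on the structural observations already established in the section: $G_M$ is a spanning subgraph of $\overline{H_M}$, so the optimal coloring $c$ of $\overline{H_M}$ is in particular a proper coloring of $G_M$; and for each row $r$, the subgraph of $\overline{H_M}$ induced by $U := V(G_{M,r})$ coincides with $G_{M,r}$. Suppose for contradiction that columns $i$ and $j$ of $M'$ are in conflict, witnessed by split rows from (possibly equal) original rows: a row $p'$ with $M'_{p',i}=M'_{p',j}=1$, a row $q'$ with $M'_{q',i}=0,\ M'_{q',j}=1$, and a row $s'$ with $M'_{s',i}=1,\ M'_{s',j}=0$. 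Since $M'$ is a row split of $M$ and the OR of the split rows of any original row $t$ equals row $t$, these three split-row conditions imply that columns $i$ and $j$ are in conflict in $M$ as well — so $i$ and $j$ are adjacent in $G_M$, hence in $G_{M,r}$ for the original row $r$ that $p'$ is a split of (note $M_{r,i}=M_{r,j}=1$, so $i,j\in V(G_{M,r})$). But $p'$ is the split row of $r$ with color $s^r_\ell = c(i) = c(j)$ for some $\ell$ (it has a $1$ in both columns $i$ and $j$), which forces $c(i)=c(j)$, contradicting that $c$ is a proper coloring of $\overline{H_M}$ and that $i,j$ are adjacent there. Thus no conflict exists. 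The main obstacle I anticipate is being careful that the three witnessing rows may come from different original rows and that one must route the conflict back to the original matrix $M$ before invoking the coloring — a conflict in $M'$ localized to three split rows "lifts" to a conflict among the corresponding original rows by monotonicity of bitwise OR. Once that is clear, the coloring argument is immediate.

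For the running time, I would account for each step. Step~2: computing $H_M$, hence $\overline{H_M}$, requires comparing all $\binom{n}{2}$ pairs of columns, each comparison taking $O(m)$ time, for $O(mn^2)$; but this can be improved — one can sort columns or otherwise decide containment in $O(n^2 + nm)$ or simply accept $O(mn^2)$, which is dominated by $O(n^2(n^{1/2}+m)) = O(n^{5/2} + mn^2)$. Step~3: $\overline{H_M}$ is a co-comparability graph (its complement $H_M$ is transitively orientable by Observation~\ref{obs:transitive}), and an optimal coloring of a co-comparability graph can be computed in polynomial time; using the standard approach via a minimum clique cover / a maximum matching or network-flow formulation on the comparability graph, this runs in $O(n^{5/2})$ time (the $n^{1/2}$ factor being the Hopcroft–Karp-type bound on a graph with $O(n^2)$ edges), which is exactly the $O(n^2\cdot n^{1/2})$ term. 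Step~4: for each of the $m$ rows we determine $c(V(G_{M,r}))$ and fill in the $t \le n$ new rows, each of length $n$; this is $O(n^2)$ per row, $O(mn^2)$ total, again within budget. Summing, the total is $O(n^{5/2} + mn^2) = O(n^2(n^{1/2}+m))$, as claimed. I expect the coloring step's complexity bound to be the one point needing a precise citation; everything else is bookkeeping.
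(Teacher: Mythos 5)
Your row-split argument and your running-time analysis are fine and essentially identical to the paper's (containment graph in $O(mn^2)$, optimal coloring of the co-comparability graph $\overline{H_M}$ via a minimum chain partition computed by Fulkerson's reduction to bipartite matching and Hopcroft--Karp in $O(n^{5/2})$, and $O(mn^2)$ for step~4). The gap is in the conflict-freeness argument, at exactly the step you flag as the main obstacle: you claim that a conflict between columns $i$ and $j$ of $M'$, witnessed by three split rows, ``lifts'' to a conflict in $M$ ``by monotonicity of bitwise OR.'' Monotonicity of OR only lifts the $1$-entries: a split row with pattern $(1,0)$ in columns $i,j$ may well come from an original row with pattern $(1,1)$, because the $1$ in column $j$ may have been routed to a \emph{different} split row of the same original row. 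Concretely, if $M$ has two identical columns $(1,1)^{T},(1,1)^{T}$ (so no conflict and mutual containment in $M$), the row split that replaces the first row by $(1,0)$ and $(0,1)$ creates a conflict in $M'$; so the implication ``conflict in $M'$ $\Rightarrow$ conflict in $M$'' is false for general row splits, and your justification does not establish it for the algorithm's split either.

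What is needed -- and what the paper's proof supplies -- is to use the structure of the algorithm's split together with the equality of colors. From the $(1,1)$-witness row one gets $c(i)=c(j)$ (you do derive this). Now take the $(1,0)$-witness split row $q'$ of an original row $q$: since $M'_{q',i}=1$, by the definition of step~4 the row $q'$ is the split row of $q$ associated with the color $c(i)$; hence, if $M_{q,j}$ were $1$, then because $c(j)=c(i)$ the same split row $q'$ would carry a $1$ in column $j$, contradicting $M'_{q',j}=0$. So $M_{q,j}=0$, and symmetrically for the $(0,1)$-witness, which is how the zeros (and hence the conflict) are transported back to $M$; the contradiction with $c(i)=c(j)$ and properness of $c$ on $\overline{H_M}$ then goes through as you wrote. (Equivalently, one can argue directly: $c(i)=c(j)$ forces containment of one column in the other in $M$, and under this particular split containment is preserved in $M'$, contradicting the conflict.) The fix is local and uses only facts you already have on hand, but as written the lifting step is unjustified and, taken literally, false.
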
}

\begin{sloppypar}
\begin{proof}
We first show that the matrix $M'$ produced by the above algorithm is a conflict-free row split of $M$.
Clearly, $M'$ is a row split of $M$.
We say that a row $r'$ of $M'$ is an {\it $r$-row} if $r$ is the row of $M$ such that
$r'$ was added to $M'$ in step 4) of the algorithm when considering row $r$.
Suppose for a contradiction that $M'$ is not conflict-free, and let $\{j,j'\}$ be a pair of conflicting columns of $M'$.
Then, there exist rows $p$, $q$, and $r$ of $M$ and rows
$p'_i$, $q'_{k}$, and $r'_{\ell}$ of $M'$ such that $p_i'$ is a $p$-row,
$q'_{k}$ is a $q$-row, and $r'_{\ell}$ is an $r$-row, $M'_{p'_i,j} = M'_{p'_i,j'} = 1$,
$M'_{q'_k,j} = 1$, $M'_{q'_k,j'} = 0$, and $M'_{r'_\ell,j} = 0$, $M'_{r'_\ell,j'} = 1$.
Since $M'_{p'_i,j} = M'_{p'_i,j'} = M'_{q'_k,j} = M'_{r'_\ell,j'} = 1$,
the definition of $M'$ implies that
$M_{p,j} = M_{p,j'} = M_{q,j} = M_{r,j'} = 1$ and $c(j) = c(j') = s_i^p$,
$c(j) = s_k^q$, and $c(j') = s_\ell^r$.
Consequently, $s_i^p = s_k^q = s_\ell^r$.
Moreover, since  $M'_{q'_k,j'} = 0$ and $c(j') = s_k^q$,
we infer that $M_{q,j'} = 0$ and, similarly,
since $M'_{r'_\ell,j} = 0$ and $c(j) = s^r_\ell$,
we infer that $M_{r,j} = 0$.
It follows that columns $j$ and $j'$ of $M$ are in conflict.
On the other hand, since $c(j) = c(j')$ and $c$ is a proper vertex coloring of $\overline{H_M}$, vertices
corresponding to $j$ and $j'$ are non-adjacent in $\overline{H_M}$. Hence, they are adjacent in $H_M$, which
means that one of the columns $j$ and $j'$ is contained in the other one, contradicting the fact that they are in conflict.
This completes the proof that $M'$ is conflict-free.

It remains to justify the time complexity. The graph $\overline{H_M}$ can be computed in time $O(mn^2)$ by comparing every ordered pair of columns
for containment.  In the same time, we can also compute a transitive orientation $\sqsubset$ of $H_M$ (as done for example in Observation~\ref{obs:transitive}).
An optimal coloring of $\overline{H_M}$ corresponds to a minimum  chain partition of the partially ordered set $P = (V(H_M),\sqsubset)$,
and can be computed in time $O(n^{5/2})$ as follows (cf.~\cite{MR545530} and~\cite[p.~73-74]{makinen2015genome}).
Applying the approach of Fulkerson~\cite{MR0078334}, a minimum chain partition of $P$ can be computed by solving a maximum matching problem in
a derived bipartite graph having $2n$ vertices. This can be done in time $O(n^{5/2})$ using the algorithm of Hopcroft and Karp~\cite{MR0337699}.
Step 4) of the algorithm can be executed in time $O(\sum_{r}|c(V(G_{M,r})|n) = O(mn^2)$.
The claimed time complexity of $O(n^2(n^{1/2}+m))$ follows.
\end{proof}
\end{sloppypar}

\section{Implementation and experimental results}\label{sec:experiments}

C++ implementations of both algorithms are available at \url{https://github.com/alexandrutomescu/MixedPerfectPhylogeny}. The input matrices must be in .csv format, and are allowed to have duplicate columns. In addition to a conflict-free row split matrix, we also output its perfect phylogeny tree. We tested our implementation on the ten binary matrices constructed from clear cell renal cell carcinomas (ccRCC) from~\cite{Gerlinger:2014aa}: EV001-EV003, EV005-EV007, RMH002, RMH004, RMH008, RK26. The phylogenetic trees constructed by~\cite{Gerlinger:2014aa} from these matrices appear in~\cite[Fig.3]{Gerlinger:2014aa}.

\begin{table*}[h!]
\caption{The numbers of rows, columns, and pairwise distinct columns in the input matrices, the lower bounds on the minimum number of rows in conflict-free row splits of the input matrices given by Corollary~\ref{cor:WABI}, and the numbers of rows and pairwise distinct rows in the matrices output by the heuristic algorithm. The distinct rows form the leaves of the output perfect phylogeny.\label{tab:number-of-rows}}
\centering
\bigskip
\begin{tabular}{l|ccc|c|cc}\hline
\multicolumn{1}{c|}{} & \multicolumn{3}{c|}{\bf Input} & \multicolumn{1}{c|}{lower bound} & \multicolumn{2}{c}{\bf Output} \\
Name & \#rows & \#cols & \#distinct cols  &  on \#rows in output & \#rows & \#distinct rows\\\hline
EV001 & 10 & 122 & 22 & 37 & 51 & 22 \\
EV002 & 7 & 103 & 17 & 18 & 29 & 17 \\
EV003 & 8 & 56 & 12 & 13 & 19 & 12 \\
EV005$^{\ast}$ & 7 & 83 & 10 & 7 & 7 & 7 \\
EV006 & 9 & 76 & 11 & 13 & 25 & 11 \\
EV007 & 8 & 66 & 15 & 19 & 25 & 15 \\
RHM002 & 5 & 54 & 11 & 9 & 13 & 11 \\
RHM004 & 6 & 140 & 17 & 16 & 21 & 17 \\
RHM008 & 8 & 81 & 10 & 10 & 20 & 10 \\
RK26 & 11 & 75 & 17 & 18 & 26 & 17 \\
\hline
\end{tabular}\\
\smallskip
{\scriptsize $^{\ast}$Input is conflict-free}
\end{table*}

These ten matrices have between 5 and 11 rows, and between 55 and 140 columns. One of them is already conflict-free, while the other nine do not belong to the polynomially-solvable case discussed in Section~\ref{sec:no-containments-in-conflicting}. On each of these nine matrices, our heuristic algorithm from Section~\ref{sec:heuristic} runs in less than one second. On ten random binary matrices with 50 rows and 1000 columns, the heuristic algorithm runs on average in 97 seconds. Due to the restricted structure of the input matrices on which the polynomial time algorithm from Section~\ref{sec:no-containments-in-conflicting} works correctly, we did not test the implementation of this algorithm on random inputs. However, since this algorithm is simpler than the heuristic one, it is plausible to expect that it will run at least as fast. Of course one might want to first check whether the input $m\times n$ binary matrix satisfies the assumption that no column is contained in both columns of a pair of conflicting columns. This can be tested straightforwardly in time $O(n^2(m+n))$ by first classifying each ordered pair of columns as conflicting, disjoint, or in containment, and then testing each triple of columns for the condition that the first one is contained in each of the other two, which are in conflict. The running time of this check is asymptotically worse than that of the heuristic algorithm. However, we expect the running times to differ little on practical instances of moderate sizes, since the constant hidden in the $O()$ notation for the above check is small.

In Table~\ref{tab:number-of-rows} we list the numbers of rows (i.e., samples) in the ten original matrices, together with numbers of columns and pairwise distinct columns, the lower bounds on the minimum number of rows in a conflict-free row split of each of the matrices given by Corollary~\ref{cor:WABI}, and the numbers of rows and pairwise distinct rows in the matrices output by the heuristic algorithm.
The similarity between the numbers of pairwise distinct columns in the input and of pairwise distinct rows in the output can be explained by the observation that if the input matrix consists of $n$ pairwise distinct columns, then there will be at most $n$ distinct rows
in the naive solution (split each row into as many rows of the identity matrix as the number of $1$s it contains).
Thus $n$ is an upper bound for an optimal solution to the {\sc Minimum Distinct Conflict-Free Row Split} problem, which explains why our heuristic algorithm applied to the {\sc Minimum Distinct Conflict-Free Row Split} problem performs similarly as the naive one. In Fig.~\ref{fig:trees} we show four perfect phylogeny trees corresponding to the matrices output by the heuristic algorithm. The results on all matrices are available online, linked from \url{https://github.com/alexandrutomescu/MixedPerfectPhylogeny}.

We also ran our heuristic algorithm on the same datasets, with the difference that we removed from the input matrices those columns that appear (as binary vectors) strictly less than 2 times (this value is a parameter to our implementation). This is the same idea and default threshold used by Popic et al.~\cite{Popic2015} in their tool LICHeE. Their motivation is that several mutations accumulate before a new tumor branch separates, and thus such rare mutation patterns may be due to errors in the data. We show four output trees in Fig.~\ref{fig:trees-minsupport2}.

\begin{sloppypar}
Finally, we also ran LICHeE on the same ten patients from~\cite{Gerlinger:2014aa}. Since LICHeE uses the variant allele frequency (VAF) of every mutation, we used the matrices containing VAF values linked from
\url{https://github.com/viq854/lichee/tree/master/LICHeE/data/ccRCC}
and ran LICHeE with the parameters indicated therein. As referred to in the above, LICHeE starts by grouping the robust mutations into clusters of size at least a given number, by default 2. (This is the same experiment as the one done in the paper \cite{Popic2015} introducing LICHeE; see~\cite{Popic2015} for further details.) We show four trees produced by LICHeE in Fig.~\ref{fig:trees-lichee}. Note LICHeE does not necessarily output binary trees.
\end{sloppypar}

\begin{figure*}
\centerline{
\subfigure[EV003]{
\includegraphics[width=4cm]{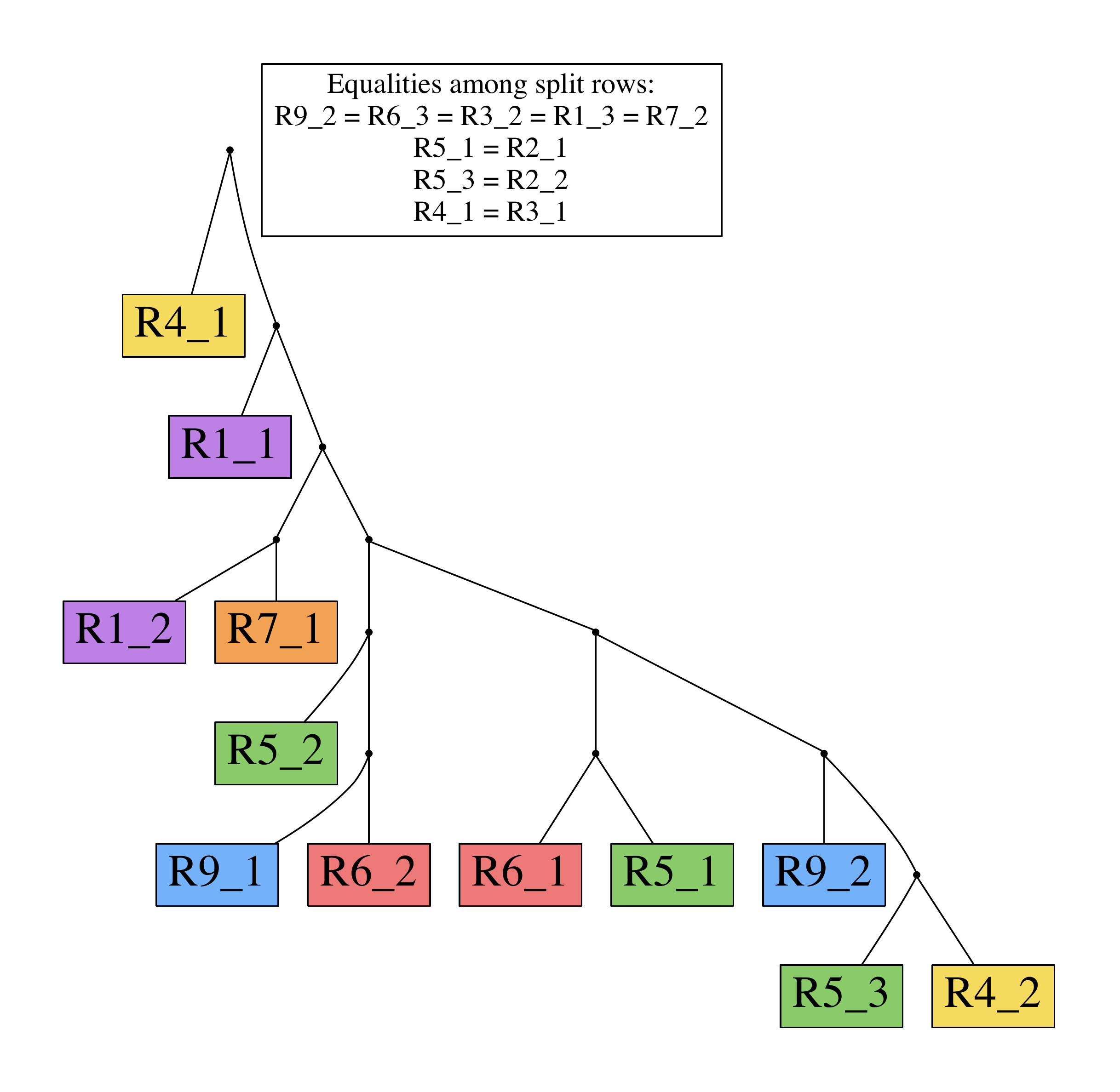}
}
\hfill
\subfigure[RMH002]{
\includegraphics[width=3.5cm]{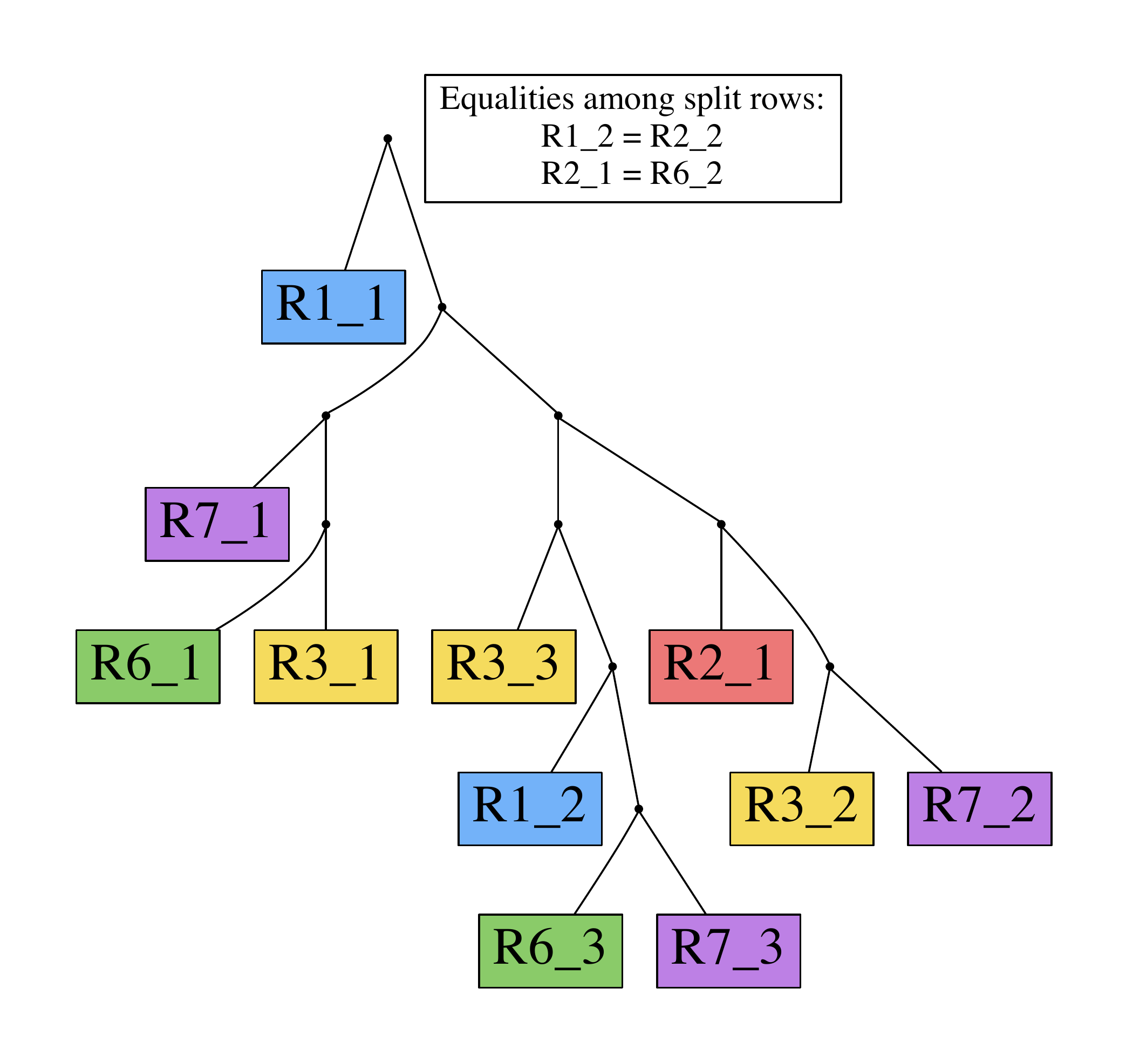}
}
\hfill
\subfigure[RMH008]{
\includegraphics[width=4cm]{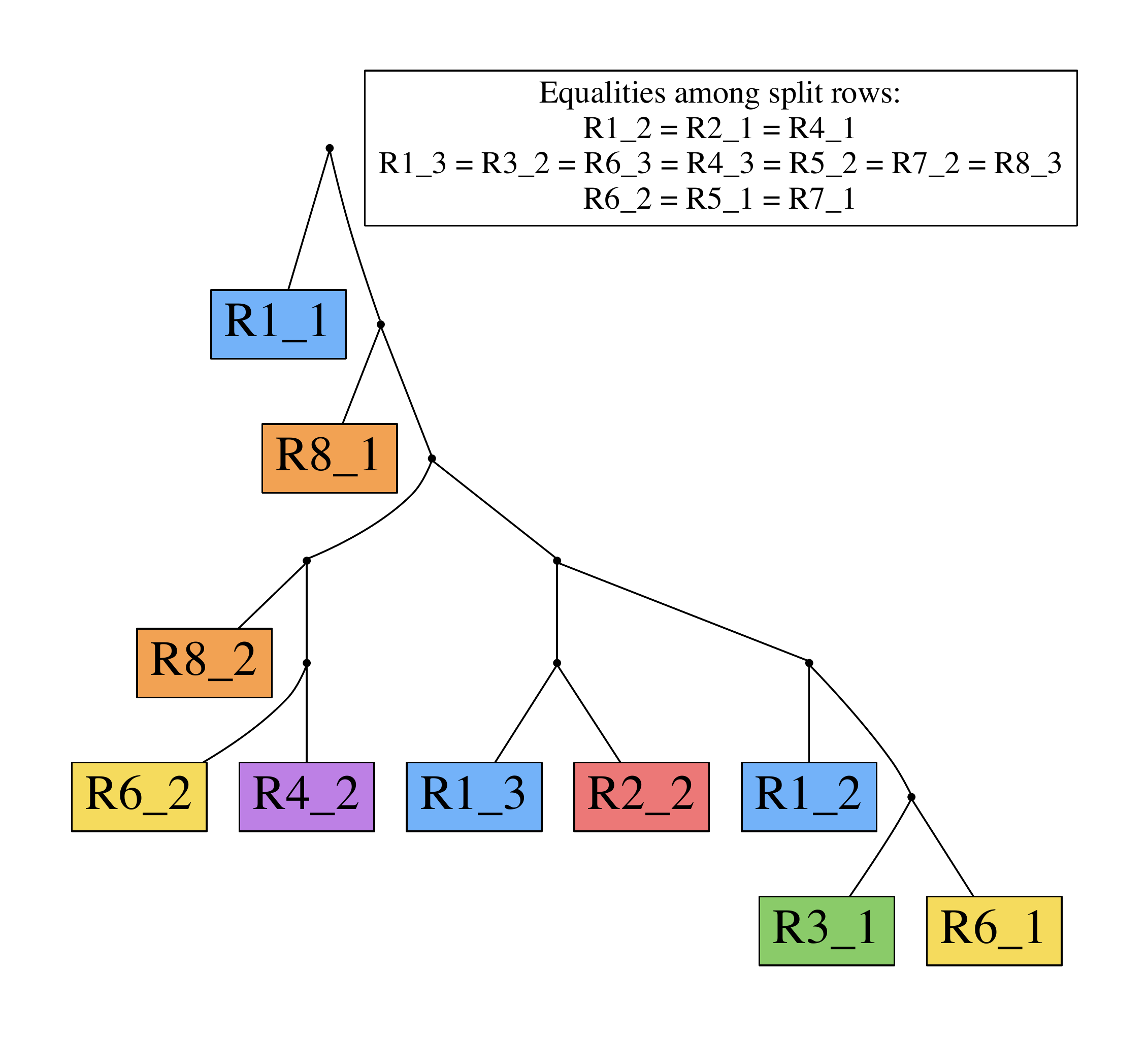}
}
\hfill
\subfigure[RK26]{
\includegraphics[width=5.5cm]{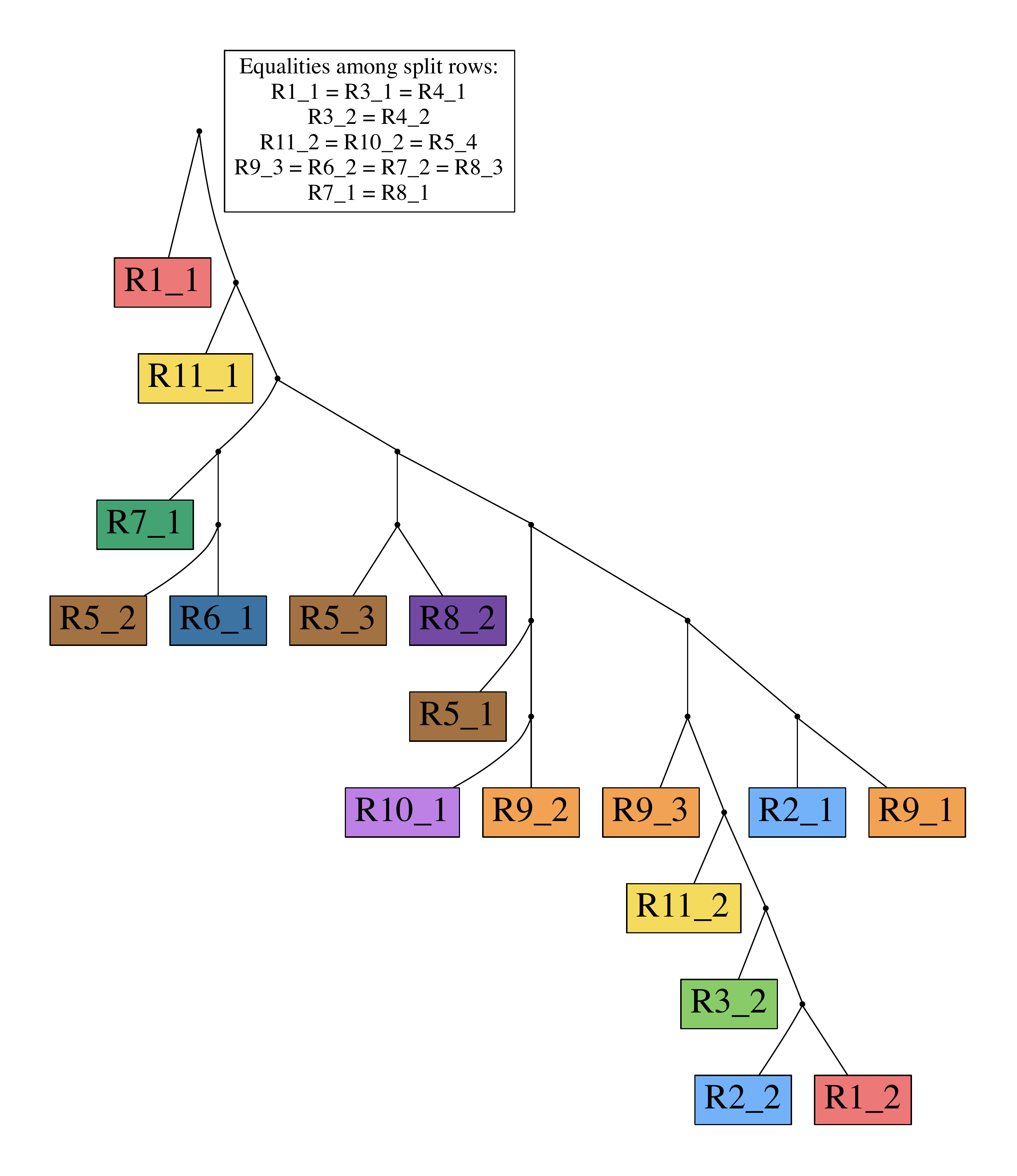}
}
}
\caption{Four perfect phylogeny trees corresponding to the conflict-free row split matrices output by our heuristic algorithm from Section~\ref{sec:heuristic}. The naming convention is: R1 is an original row (i.e., sample) name, and R1\_1, R1\_2, R1\_3 are the row (i.e., samples) names corresponding to R1 in the output matrix. Equal split rows form a single node of the perfect phylogeny (equalities are indicated in boxes).\label{fig:trees}}
\end{figure*}

\begin{figure*}
\centerline{
\subfigure[EV003]{
\includegraphics[width=3cm]{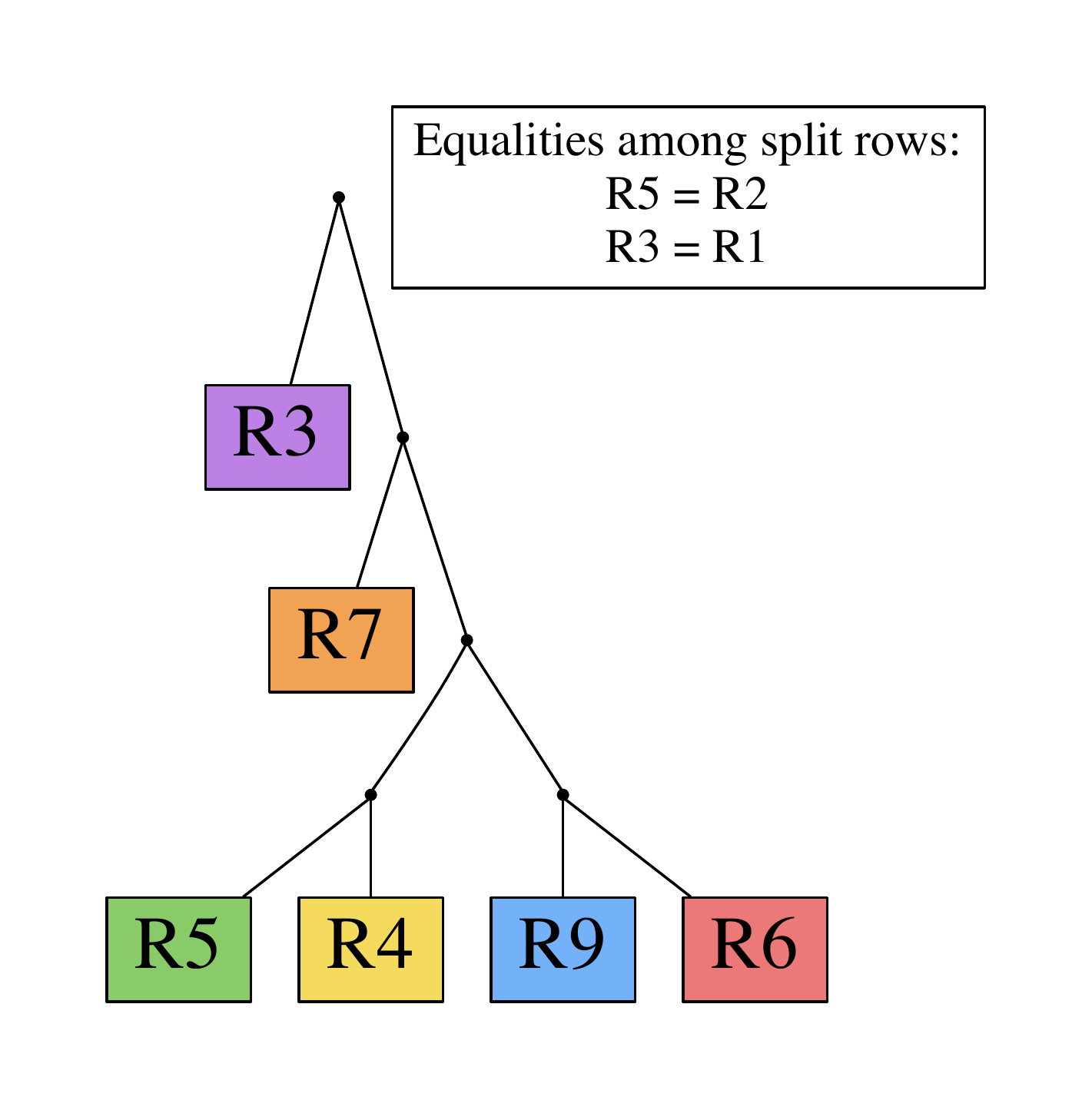}
}
\hfill
\subfigure[RMH002]{
\includegraphics[width=2cm]{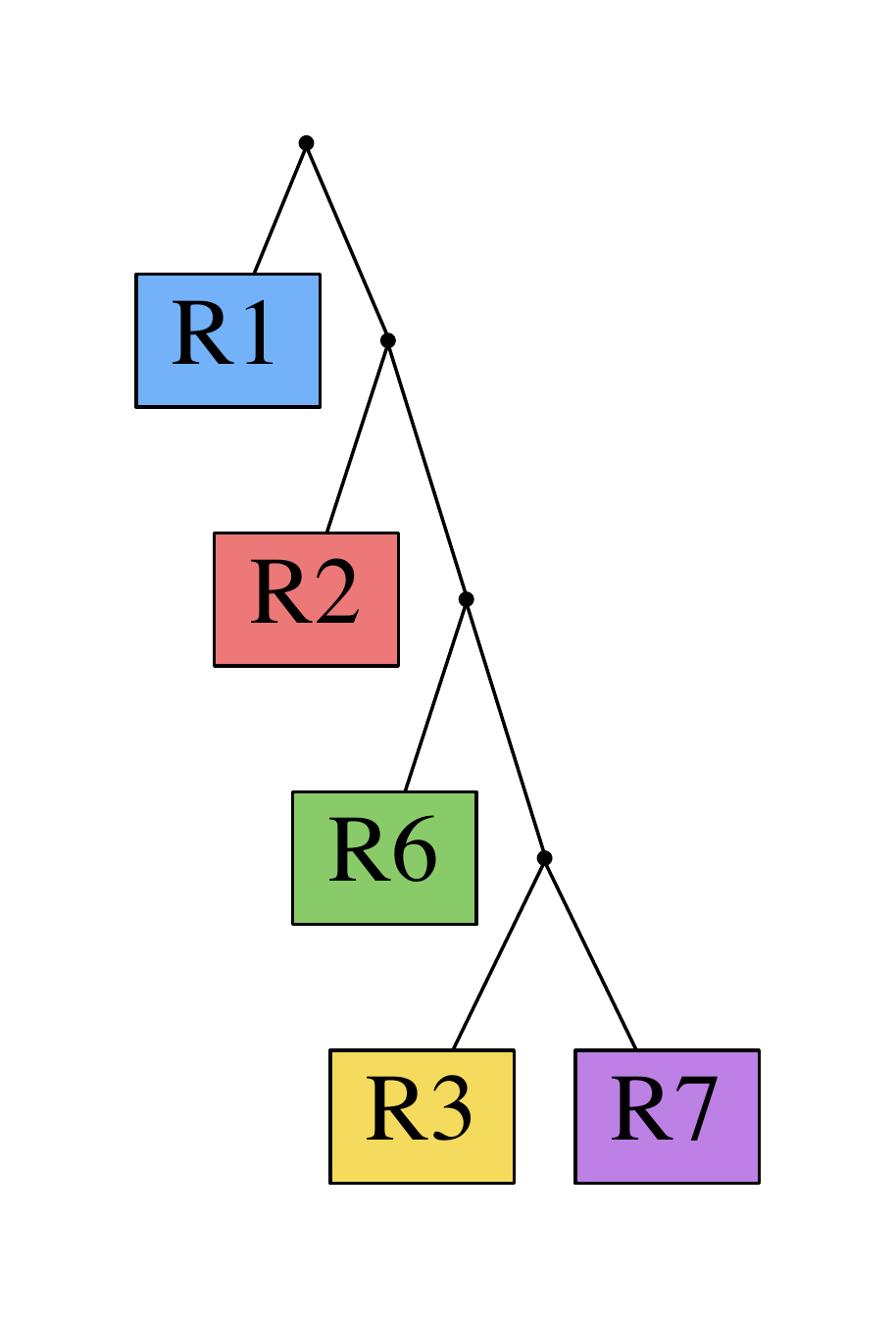}
}
\hfill
\subfigure[RMH008]{
\includegraphics[width=5.5cm]{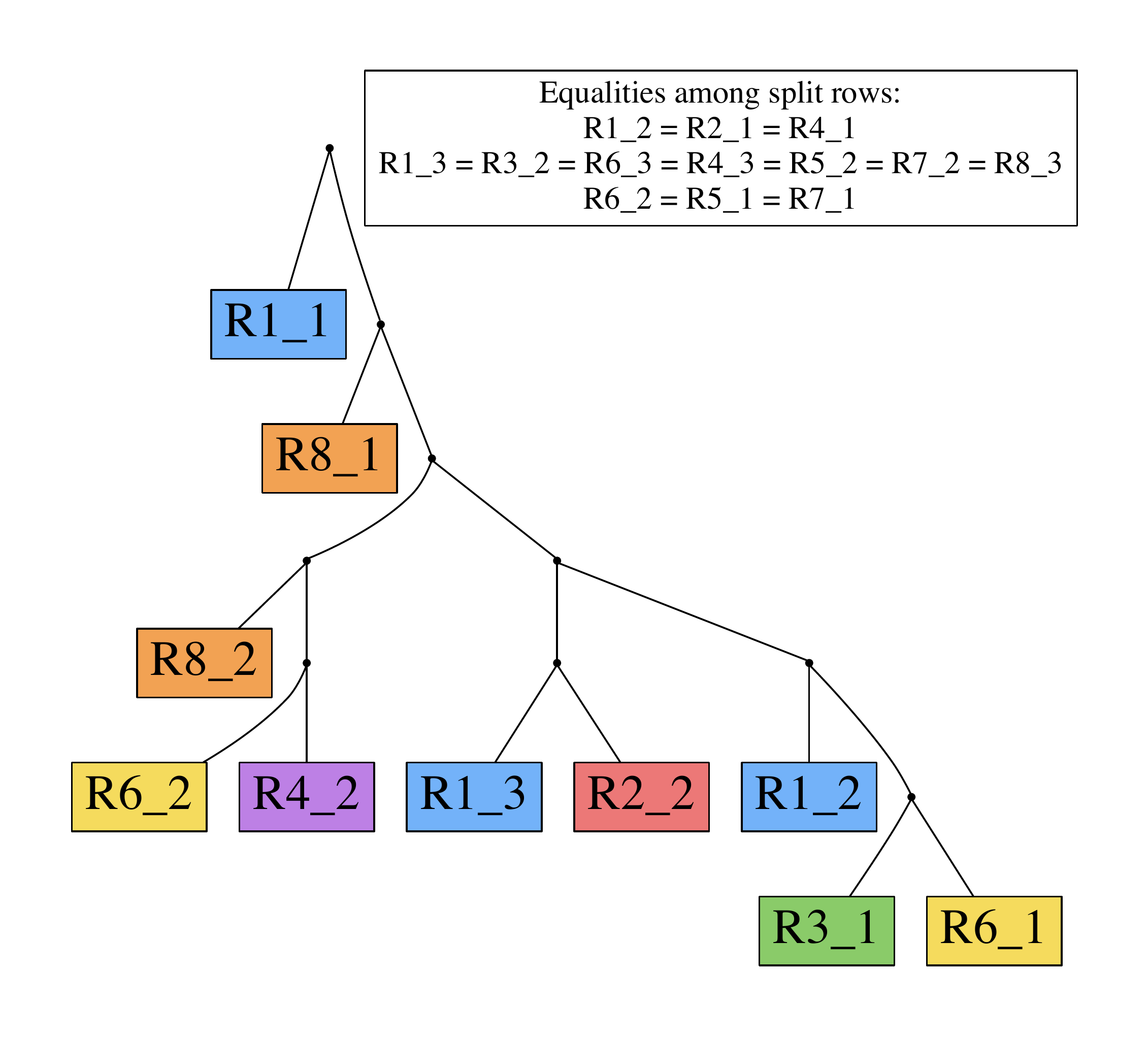}
}
\hfill
\subfigure[RK26]{
\includegraphics[width=5.5cm]{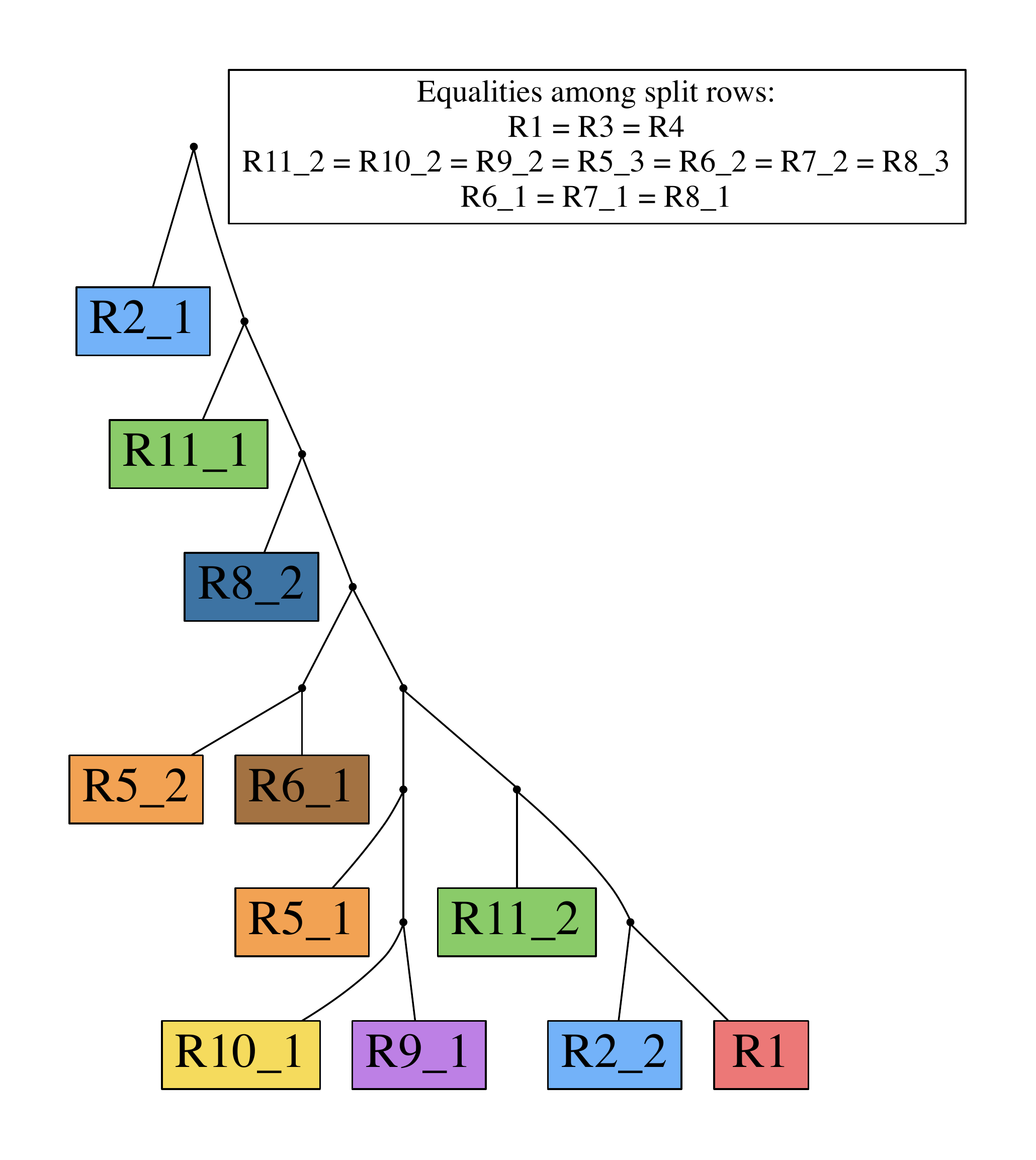}
}}
\caption{The perfect phylogeny trees output by our heuristic algorithm, after removing from the input matrices those columns that appear only once. The naming convention is as in Fig.~\ref{fig:trees}. Compare these trees to the trees from \cite[Fig.~3]{Gerlinger:2014aa} and to the ones from Fig.~\ref{fig:trees-lichee}.\label{fig:trees-minsupport2}}
\end{figure*}

\begin{figure*}
\centerline{
\subfigure[EV003]{
\includegraphics[width=4.5cm]{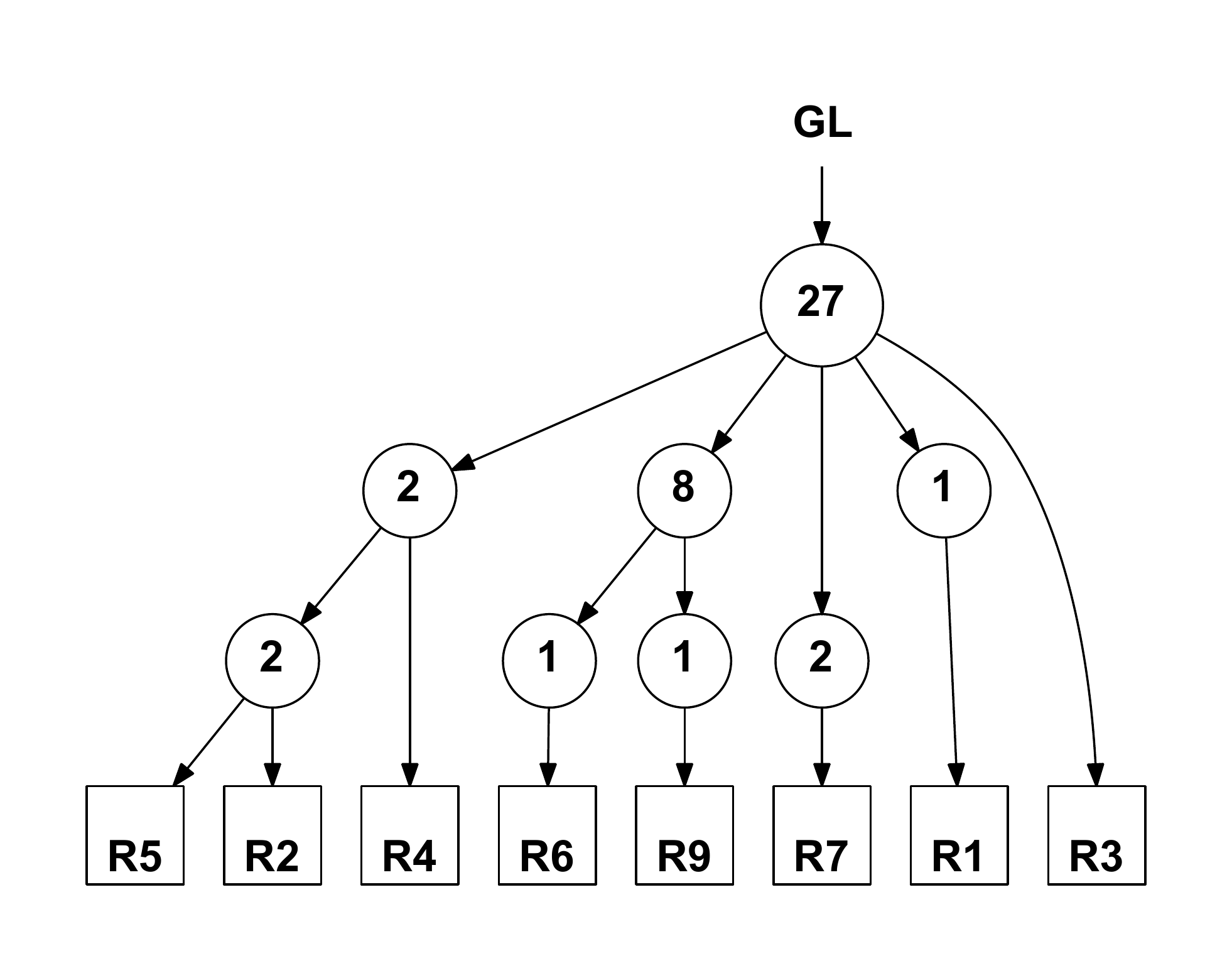}
}
\hfill
\subfigure[RMH002]{
\includegraphics[width=3cm]{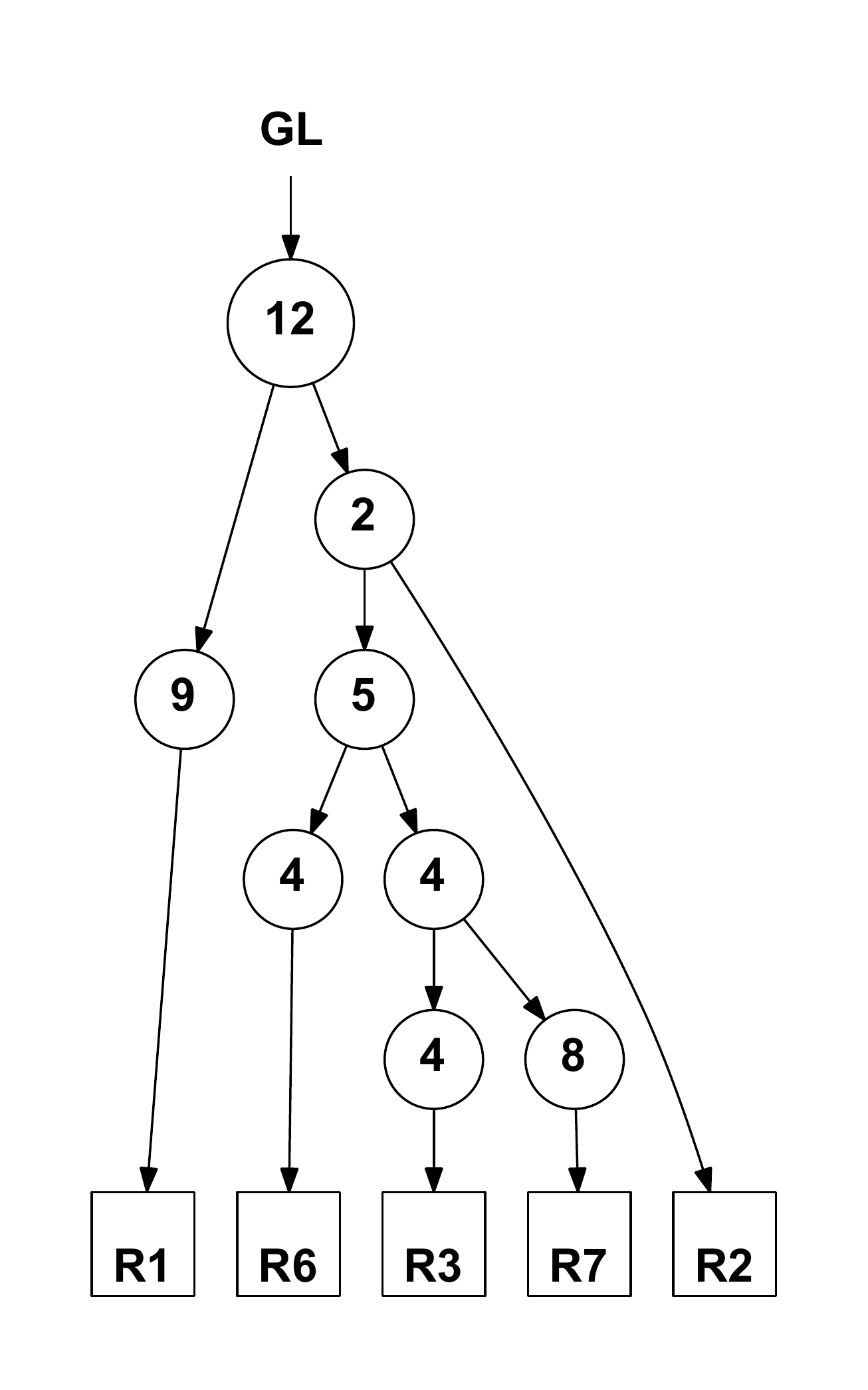}
}
\hfill
\subfigure[RMH008]{
\includegraphics[width=4cm]{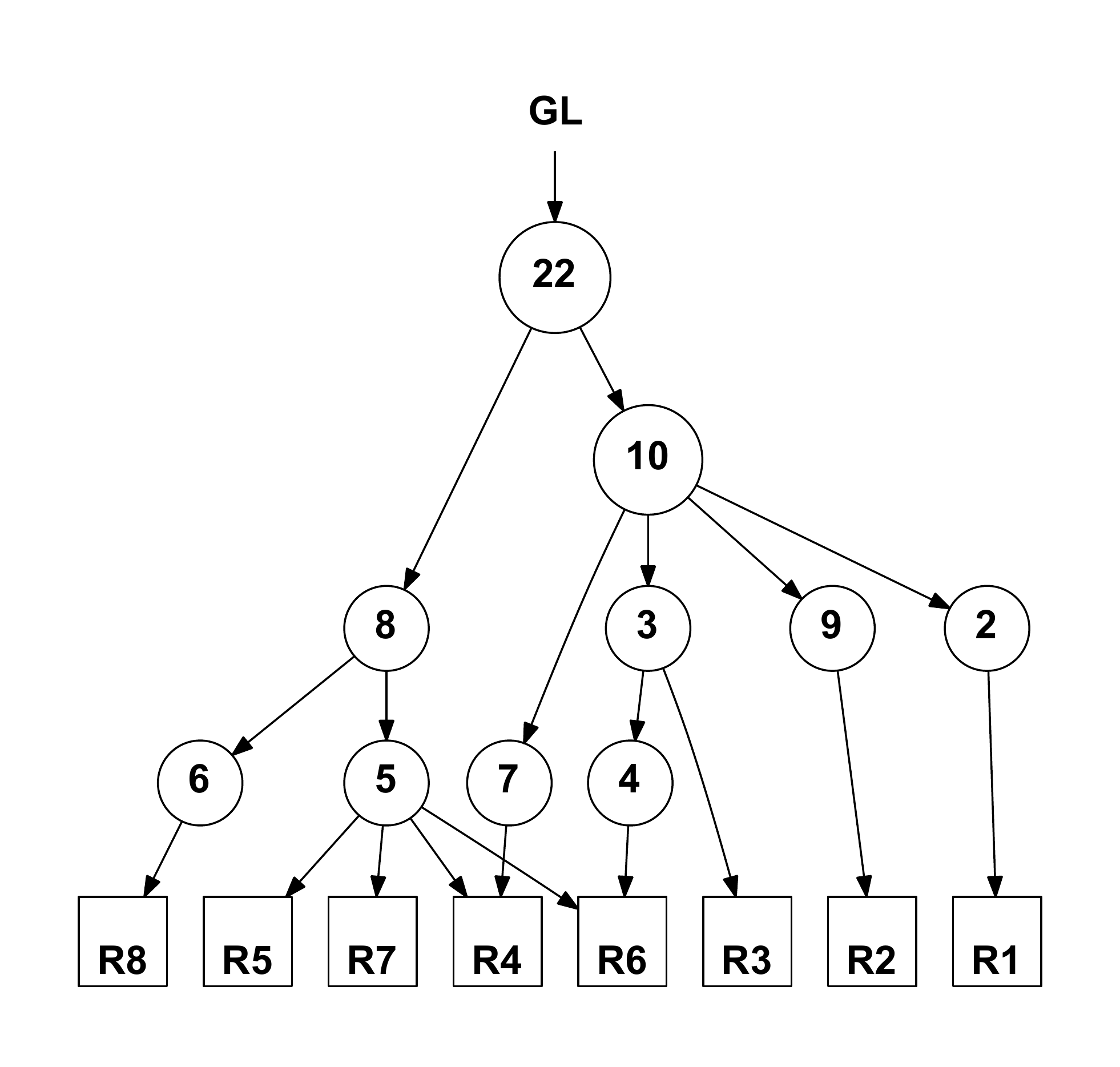}
}
\hfill
\subfigure[RK26]{
\includegraphics[width=5cm]{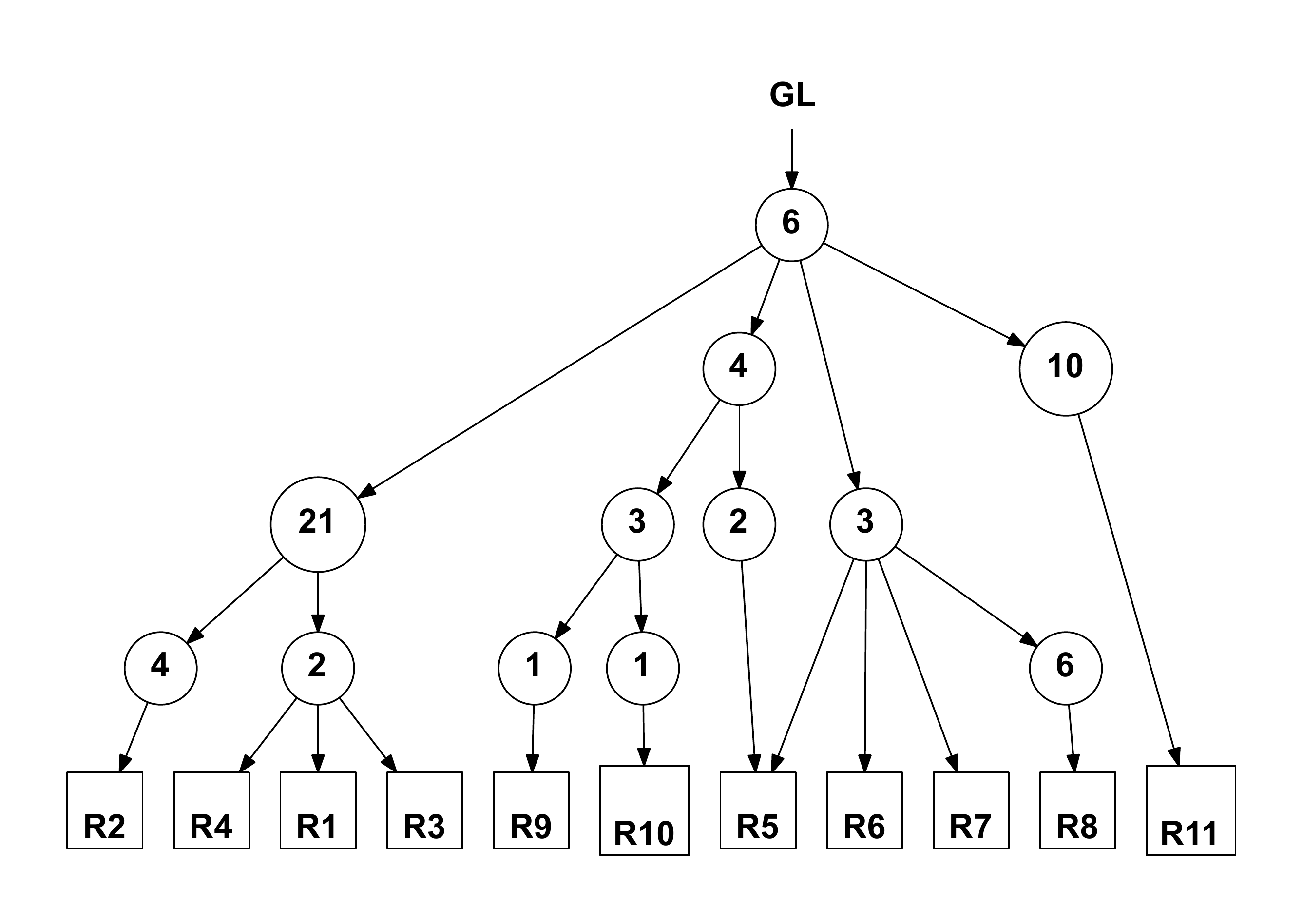}
}
}
\caption{The tumor evolutions predicted by LICHeE~\cite{Popic2015}. Numbers in the circles represent the number of mutations at that node. See~\cite{Popic2015} for further details.\label{fig:trees-lichee}}
\end{figure*}

\section{Discussion}
\label{sec:discussion}

After filtering the unique columns of the matrices EV003 and RMH002, the resulting matrices are conflict-free. The perfect phylogeny for RMH002 (Fig.~\ref{fig:trees-minsupport2}) is the same as the one produced by LICHeE (Fig.~\ref{fig:trees-lichee}) and the one from~\cite[Fig.~3]{Gerlinger:2014aa}. The perfect phylogeny for EV003 (Fig.~\ref{fig:trees-minsupport2}) is almost identical to the one produced by LICHeE (Fig.~\ref{fig:trees-lichee}) and the one from~\cite[Fig.~3]{Gerlinger:2014aa}, the only difference being that each of the pairs of samples $\{$R2, R5$\}$ and $\{$R1, R3$\}$ is collapsed into a single leaf of the phylogeny. This may be due to the fact that LICHeE and the method used by~\cite{Gerlinger:2014aa} exploit VAF values, not only binary values. Overall, these two matrices suggest that filtering out rare columns may be a relevant strategy.

All the columns of the matrix RMH008 appear at least two times, thus there are no columns to be filtered out. While LICHeE finds that only samples R4 and R6 are a combination of more leaves of the tumor phylogeny, our algorithm finds that all eight samples are combinations of two or more leaves. However, there are similarities to the prediction of LICHeE. For example, both samples R4 and R6 have mutations in common with R5 and R7 (R6\_2 = R5\_1 = R7\_1, (R1\_3 = ) R4\_3 = R5\_2 = R7\_2). Our subclones R6\_2 and R4\_3 also correspond to the subclones R6\_{dom} and R4\_{min}, respectively, found in~\cite{Gerlinger:2014aa}. Samples R4 and R6 also have mutations in common with R1 (R1\_3 = R6\_3 = R4\_3), and with R2 (nodes R2\_2 and R1\_3 = R6\_3 = R4\_3 are both siblings of a node at distance 4 from the root). The subclones R6\_3 and R4\_3 also correspond to the subclones R6\_{min} and R4\_{dom}, respectively, found in~\cite{Gerlinger:2014aa}.

In matrix RK26, 6 columns are unique, and they have been filtered out in Fig.~\ref{fig:trees-minsupport2}. LICHeE finds that only sample R5 is a combination of more leaves, while our algorithm again finds that all samples are combinations of two or more leaves. However, there are again similarities in the results. Sample R5 has mutations shared with R6, R7, R8 (nodes R5\_2 and R6\_1 = R7\_1 = R8\_1 are siblings of a node at depth 4). Subclone R5\_2 also corresponds to subclone R5\_{dom} found in~\cite{Gerlinger:2014aa}. Sample R5 also has mutations shared with R9 and R10 (nodes R5\_1, R10\_1 and R9\_1 have the lowest common ancestor at depth 5). Subclone R5\_1 also corresponds to subclone R5\_{min} found in~\cite{Gerlinger:2014aa}.

\section{Conclusion}\label{sec:conclusion}

In this paper we {showed hardness of the
{\sc Minimum Conflict-Free Row Split} and the
{\sc Minimum Distinct Conflict-Free Row Split} problems, }and gave a polynomial time algorithm for the
{\sc Minimum Conflict-Free Row Split} problem on instances such that no column is contained in both columns of a pair of conflicting columns.
More general tractable instances could be found by inspecting further dependencies between column containment and conflictness. For example, it remains open whether the {\sc Minimum Conflict-Free Row Split} problem is tractable on matrices in which no pair of conflicting columns is contained in both columns of a pair of conflicting columns. It would also be interesting to
identify polynomially solvable cases of the {\sc Minimum Distinct Conflict-Free Row Split} problem and to
explore variations of the problems in which we are also allowed to edit the entries of the input matrix.

{In the paper we also gave a polynomial time heuristic algorithm for the {\sc Minimum Conflict-Free Row Split} problem based on graph coloring.
We leave as a question for future research to determine the \hbox{(in-)approximability} status of the optimization variants of the two problems. }
Finally, we remark that in~\cite{DBLP:conf/wabi/HajirasoulihaR14} it was assumed that the matrices have no duplicated columns, which was not necessary in this paper.

%
%

\bibliographystyle{abbrv}
\bibliography{biblio}
\end{document}